\theoremstyle{plain}
\newtheorem{thm}{\bf Theorem} 
\newtheorem{lem} {\bf Lemma} 
\newtheorem{cor} {\bf Corollary}
\newtheorem{pro} {\bf Proposition}
\theoremstyle{definition}
\newtheorem{rem} {\bf Remark}
\newtheorem{defn} {\bf Definition}
\newtheorem{example} {\bf Example}
\newcommand\bx{\mathbf x} 
\newcommand\bz{\mathbf z} 
\newcommand\rk{{\rm rank\,}} 
\newcommand\rd{{\rm d}} 
\newcommand{\setdef}[2]{\left\{ #1 \left|\ \vphantom{#1} #2 \right.\right\}}
\newcommand{\sspan}[1]{{\rm span}_{\mathcal K}\left\{ #1 \right\}}
\newcommand{\kspan}[1]{{\rm span}_{\mathcal K(\delta]}\left\{ #1 \right\}}
\newcommand{\smatrix}[1]{\left[ \begin{smallmatrix}  #1 \end{smallmatrix}\right]}
\def\BibTeX{{\rm B\kern-.05em{\sc i\kern-.025em b}\kern-.08em
    T\kern-.1667em\lower.7ex\hbox{E}\kern-.125emX}}
\begin{document}
\title{Implicit function theorem for nonlinear time-delay systems with algebraic constraints}
\author{Yahao Chen,  Malek Ghanes \IEEEmembership{Member, IEEE}, and Jean-Pierre Barbot, \IEEEmembership{Senior Member, IEEE}
\thanks{Yahao Chen is with Nantes University, Centrale Nantes, LS2N UMR CNRS 6004, France (e-mail: yahao.chen@ls2n.fr). }
\thanks{Malek Ghanes is with Nantes University,  Centrale Nantes, LS2N UMR CNRS 6004, France (e-mail: malek.ghanes@ls2n.fr).}
\thanks{Jean-Pierre Barbot is with 
ENSEA, Quartz EA 7393 and  Nantes University, Centrale Nantes, LS2N UMR CNRS 6004, France  (e-mail: barbot@ensea.fr).}}

\maketitle

\begin{abstract}
 In this note, we discuss a  generalization of the well-known implicit function theorem to the time-delay case. We show that the latter  problem is closely related to the bicausal changes of coordinates  of time-delay systems \cite{califano2014coordinates,califano2016accessibility}. An iterative algorithm is proposed to check the conditions and to construct the desired bicausal change of coordinates for the proposed implicit function theorem. Moreover, we show that   our results can be applied to   delayed differential-algebraic equations (DDAEs) to
  reduce their indices and   to get their solutions. Some numerical examples are given to illustrate our results. 
\end{abstract}

\begin{IEEEkeywords}
nonlinear systems, time-delay, bicausal changes of coordinates, implicit function theorem, causality, differential-algebraic equations 
\end{IEEEkeywords}

\section{Introduction}\label{sec:1}
We start from three different equations with time-delay variables:
$$
\begin{aligned}
a(\bx_1,\bx_2)&=x_1(t)x_2(t-1)+x_2(t)x_2(t-1)+e_1=0,\\
b(\bx_1,\bx_2)&=x_1(t)x_2(t-1)+x_1(t-1)x_2(t)x_2(t-2)+e_2=0,\\
c(\bx_1,\bx_2)&=x_1(t)x_1(t-1)+x_2(t)x_2(t-1)+e_3=0,\\
\end{aligned}
$$
where $(\bx_1,\bx_2)=(x_1(t),x_2(t),x_1(t-1),x_2(t-1),x_2(t-2))$ and $e_1,e_2,e_3$ are nonzero constants. The purpose is to express $x_1(t)$ as a function of $x_2(t)$ and its time-delays from each algebraic equation. For instance, it is clear to get $x_1(t)=\frac{-e_1-x_2(t)x_2(t-1)}{x_2(t-1)}$ for $x_2(t-1)\neq 0$ from the first equation, while it is not obvious  if we can have similar conclusions for the other two equations. In the delay-free case, given some algebraic equations $\lambda(x_1,x_2)=0$, where $\lambda\in \mathcal K^p$, $x_1\in \mathbb R^p$, $x_2\in \mathbb R^{n-p}$ and $\mathcal K$ denotes the field of meromorphic functions, if the matrix $\frac{\partial\lambda}{\partial x_1}(x_1,x_2)\in \mathcal K^{p\times p}$ is invertible for all $(x_1,x_2)\in \mathbb R^n$ such that $\lambda(x_1,x_2)=0$ (or, 
a simpler but stronger condition, for all $(x_1,x_2)\in \mathbb R^n$), then by the classical implicit function theorem (see e.g. \cite{krantz2002implicit}), there exist functions $\gamma:\mathbb R^{n-p}\to \mathcal K^p $ such that $\lambda(x_1,x_2)=0$ implies $x_1=\gamma(x_2)$. We will study in this note a   generalization of the implicit function theorem to algebraic equations with time-delays.

To deal with functions with time-delay variables, the algebraic framework proposed in \cite{xia2002analysis} is a very useful tool. There are many applications of  this framework see e.g.,   \cite{marquez2007new,marquez2002observability,halas2013retarded,zheng2015unknown, bejarano2021zero} for the problems like  observations and  structure analysis for time-delay systems, and more recently,   the  papers \cite{califano2013controllability,califano2014coordinates,kaldmae2015integrability,califano2016accessibility,califano2020observability} and the book \cite{califano2021nonlinear} for  extensions of the classical geometric control methods to nonlinear time-delay  systems.

With the help of the algebraic framework, we will show in section~\ref{sec:2}   that although it is not possible to express $x_1(t)$ as a function of $x_2(t), x_2(t-1),x_2(t-2)$  from the last two equations, by a bicausal change of coordinates (see Definition~\ref{Def:bicausal} below) $(\tilde x_1,\tilde x_2)=\varphi(\bx_1,\bx_2)$, the equation $b(\varphi^{-1}(\tilde \bx_1,\tilde \bx_2))=0$, i.e., $b(x_1,x_2)=0$ in $(\tilde \bx_1,\tilde \bx_2)$-coordinates,  implies  $\tilde x_1=\gamma(\tilde \bx_2)$ for some function $\gamma$ (but we can not find such a bicausal  change of coordinates for  $c(\bx)=0$).
The first problem  is that   when is it possible and how do we find such a bicausal coordinates transformation for a given time-delay algebraic equation?  It turns out that such a problem is closely related to   when   functions  with time-delay variables can be regarded as new bicausal coordinates and how to construct their complementary bicausal coordinates, the latter problems are discussed in \cite{marquez2007new,califano2014coordinates,califano2016accessibility}. We will recall some results from \cite{califano2016accessibility} and add two extra equivalent conditions as Theorem~\ref{Thm:impl}   in order to explain the  relations. Then a generalization of implicit function theorem to time-delay equations is given  as a corollary of Theorem~\ref{Thm:impl}. 

To check the equivalent conditions of Theorem \ref{Thm:impl}, we need to either construct the right-annihilator/kernel or the right-inverse  of  some polynomial matrix-valued functions, which can be done with the help of Smith canonical form for polynomial matrix-valued functions  (see \cite{marquez2007new}, also \cite{gantmacher1959matrix} for polynomial  matrices with entries in $\mathbb R[\delta]$). We will discuss in section \ref{sec:4} below that the latter method has troubles when   checking the necessity of those conditions. To deal with the latter problem, we propose an iterative algorithm  by reducing  the polynomial degree of the polynomial matrix-valued functions via bicausal changes of coordinates, which eventually allows  to check the conditions of Theorem \ref{Thm:impl} and to construct the desired complementary bicausal  coordinates.

Another contribution of this note is to apply the proposed implicit function theorem to delayed differential-algebraic equations (DDAEs), i.e., implicit time-delay equations (see e.g. \cite{campbell19912,du2013stability,ha2016analysis,trenn2019delay} for linear DDAEs and \cite{venkatasubramanian1994time,ascher1995numerical,zhu1998asymptotic} for nonlinear DDAEs). It is well-known that for delay-free differential-algebraic equations (DAEs), the classical implicit function theorem is an essential tool for its index-reduction problem, e.g., given a semi-explicit DAE $\dot x_1=f(x_1,x_2), 0=g(x_1,x_2)$, if $\frac{\partial g(x_1,x_2)}{\partial x_2}\neq 0$ for all $(x_1,x_2)\in \mathbb R^2$ (i.e., the DAE is index-1), then to reformulate the DAE as an ordinary differential equation (ODE), we use the implicit function theorem to get $x_2=\eta(x_1)$  from the algebraic constraint for some functions $\eta$,  then it results in  an ODE $\dot x_1=f(x_1,\eta(x_1))$. For a   high-index DAE, the geometric reduction method can be used to reduce the index, see e.g. \cite{RabiRhei02,Riaz08,chen2022geometric,chen2021normal}.  We will show below that by assuming that 
the algebraic constraints of DDAEs satisfy the conditions of the proposed implicit function theorem, a time-delay version of the  geometric reduction method can be realized.

This note is organised as follows. Notations and the definitions of some notions in the algebraic framework are   given in section~\ref{sec:2}. The time-delay implicit function theorem is discussed in section \ref{sec:3}. The algorithm to check the conditions of the time-delay implicit function theorem is given in section \ref{sec:4}. In section \ref{sec:5}, we discuss the index reduction algorithm and the solutions of nonlinear DDAEs by applying the results of sections \ref{sec:3} and \ref{sec:4}. The conclusions and perspectives are put into section~\ref{sec:6}.
\section{Notations and Preliminaries}\label{sec:2}
We  follow  the algebraic framework of time-delay systems proposed in \cite{xia2002analysis}, the notations below are taken from those in e.g., \cite{xia2002analysis,califano2016accessibility,califano2021nonlinear}. In this note, we do not deal with singularities and assume throughout that $f(\bx)\equiv 0$ holds for no non-trivial meromorphic function $f$.
\begin{small} 
	\begin{tabular}{p{0.13\columnwidth}p{0.8\columnwidth}}
	${\rm I}_r$   & identity matrix of $\mathbb R^{r\times r}$.\\
	$x(\pm j)$ & $x(t\pm j)$, $j\ge 0$.\\
	$\bx_{[\underline j,\bar j]}$& $[x^T(-\underline j),\ldots, x^T(-\bar j)]^T\in \mathbb R^{(\bar j-\underline j+1)n}$, $\underline j\le \bar j$. Notice that both $\underline j$ and $\bar j$ can be negative indicating the forward time-shifts of $x(t)$.\\
	$\bx_{[\bar j]}$ &  $\bx=\bx_{[0,\bar j]}=[x^T,x^T(-1),\ldots, x^T(-\bar j)]^T\in \mathbb R^{(\bar j+1)n}$, where $x=x(0)=x(t)\in \mathbb R^n$ and $\bar j\ge 0$.\\
	$\mathcal K$ & the field of meromorphic functions.\\
	$\rd $ & the differential operator: for $\xi(\bx_{[\bar j]})\in \mathcal K$ and $\lambda(\bx_{[\bar j]})\in \mathcal K^p$, $\rd \xi(\bx_{[\bar j]})=\sum\limits^{\bar j}_{j=0}\frac{\partial \xi(\bx_{[\bar j]})}{\partial x(-j)}\rd x(-j)$ and $\rd \lambda=\smatrix{\rd \lambda_1\\ \cdots \\ \rd \lambda_p}$.\\
	$\delta$ & the backward time-shift  operator: for $a(t),\xi(t)\in \mathcal K$, $\delta^{j}\xi(t)=\xi(-j)$        and       $ \delta^j(a(t)\rd \xi(t))=a(-j) \rd \xi(-j)$. \\
 $\Delta$ & the forward time-shift  operator: for $a(t),\xi(t)\in \mathcal K$,  $\Delta^{j}\xi(t)=\xi(+j)$        and       $ \Delta^j(a(t)\rd \xi(t))=a(+j) \rd \xi(+j)$. \\
 $\mathcal K(\delta]$ &the left (Ore-)ring of polynomials in $\delta$ with entries in $\mathcal K$, any $\alpha(\bx,\delta)\in \mathcal K(\delta]$ has the form $\alpha(\bx,\delta)=\sum\limits^{\bar j}_{j=0}\alpha^j(\bx)\delta^j$, where $\alpha^j(\bx)\in \mathcal K$.\\  
  $\deg(\cdot) $& The  polynomial degree. For $\alpha(\bx,\delta)\in \mathcal K(\delta]$ above, $\deg(\alpha)=\bar j$. For $\beta(\bx,\delta)=[\beta_1(\bx,\delta), \ldots,\beta_n(\bx,\delta)]\in \mathcal K^n(\delta]$, $\deg(\beta)=\max\{\deg(\beta_i), 1\le i\le n\}$.\\
  $\wedge$ & exterior product \\
	\end{tabular} 
\end{small} 

 The sums and multiplications for any two elements of $\mathcal K(\delta]$ are well-defined \cite{xia2002analysis},  so  is the rank of a matrix $A(\cdot,\delta)\in \mathcal K^{r\times m}(\delta]$  over $\mathcal K(\delta]$, denoted by $\rk_{\mathcal K(\delta]} A(\cdot,\delta)$. 
Remark that a polynomial matrix-valued function $A(\cdot,\delta)\in \mathcal K^{r\times r}(\delta]$ is of full rank over $\mathcal K(\delta]$   does not necessarily mean that $A(\cdot,\delta)$ has a polynomial inverse over $\mathcal K(\delta]$, the following notion of unimodularity  generalizes that of invertibility of non-polynomial matrices.
 \begin{defn}[\cite{xia2002analysis,marquez2007new}]
A matrix $A(\cdot,\delta)\in \mathcal K^{r\times r}(\delta]$ is called  \emph{unimodular} if there exists a matrix $B(\cdot,\delta)\in \mathcal K^{r\times r}(\delta]$ such that $A(\cdot,\delta)B(\cdot,\delta)=B(\cdot,\delta)A(\cdot,\delta)={\rm I}_{r}$.
 \end{defn} 
Denote the vector space generated by the differentials $\rd x(-j)$, $j\ge 0$ over $\mathcal K$ by $\mathcal E$. An element $\omega \in \mathcal E$ is called one-form. The one-form $\omega$ is \emph{exact}, i.e., there exists $\lambda\in \mathcal K$ such that $\omega=\rd \lambda$, if and only if $\rd \omega=0$ (Poincar\'e  lemma \cite{lee2001introduction}). A $p$-dimensional codistribution $\sspan{\omega_1, \ldots,\omega_p}$ is \emph{integrable}, i.e., there exist $\lambda_1,\ldots,\lambda_p\in \mathcal K$ such that $\sspan{\omega_1, \ldots,\omega_p}=\sspan{\rd\lambda_1, \ldots,\rd\lambda_p}$, if and only if $\rd \omega_i\wedge \omega_1\wedge \cdots\wedge \omega_p=0$, for $1\le i\le p$ (Frobenius theorem   \cite{lee2001introduction}).   The sets of one-forms defined over the ring $\mathcal K(\delta]$ have both the structure of a vector space $\mathcal E$ over $\mathcal K$ and the structure of a (left)-module, $\mathcal M=\kspan{\rd x}$. A (left)-submodule of $\mathcal M$ consists of all possible linear combinations of given one forms over the ring $\mathcal K(\delta]$. Denote $\mathcal O:=\kspan{\omega_1,\ldots,\omega_p}\subseteq \mathcal M$
the submodule generated by one forms $\omega_1,\ldots,\omega_p$ over $\mathcal K(\delta]$. The \emph{right-annihilator} (or the kernel) of the submodule $\mathcal O$ is spanned by all vectors $\tau(\cdot,\delta)\in \mathcal K^n(\delta]$ such that $\omega_i(\cdot,\delta)\tau(\cdot,\delta)=0$ for $1\le i\le p$. The right-annihilator is called \emph{causal} if there are no forward time-shifts on the variables of the coefficients of $\tau(\cdot,\delta)$. The closure of the submodules of $\mathcal M$  recalled below will play an important role.
 \begin{defn}[\cite{xia2002analysis}]
 	Given a finite generated module $\mathcal M$,   let $\mathcal N$ be a submodule  of  $\mathcal M$ of dimension $r$   over $\mathcal K(\delta]$, the closure of $\mathcal N$ is the submodule
 	$$
 	\overline {\mathcal N}:=\setdef{\omega\in \mathcal M}{\exists\,  0\neq \alpha(\cdot,\delta)\in \mathcal K(\delta],\  \alpha(\cdot,\delta)\omega\in \mathcal N},
 	$$
 	or equivalently, $\overline {\mathcal N}$ is the largest submodule  of $\mathcal M$ which contains $\mathcal N$ and is of rank $r$. The submodule $\mathcal N$ is called \emph{closed} if ${\mathcal N}=\overline {\mathcal N}$.
 \end{defn}
 The following definition of bicausal change of coordinates will be used in the note:
\begin{defn}[bicausal  coordinates changes \cite{marquez2002observability,califano2016accessibility}]\label{Def:bicausal}
Consider a system (differential or not) with state coordinates $x\in \mathbb R^n$. A mapping $z=\varphi(\bx_{[\bar j]})\in \mathcal K^n$, is called a bicausal change of coordinates if there exist an integer $\bar j_z\ge 0$ and  a mapping $\varphi^{-1}\in \mathcal K^n$ such that $x=\varphi^{-1}( \bz_{[\bar j_z]})$.
\end{defn}
Remark that a mapping $z=\varphi(\bx)$ is a bicausal change of coordinates if and only if $T(\bx,\delta)\in \mathcal K^{n\times n}(\delta]$ is a unimodular matrix \cite{califano2021nonlinear}, where  $\rd z=\rd\varphi(\bx)=T(\bx,\delta)\rd x $. For a function $\lambda(\bx)\in \mathcal K$, we will simply write  $\lambda$ in $z$-coordinates as $$\lambda(\bz):=\lambda(\varphi^{-1}(\bz),\ldots, \varphi^{-1}(\bz(-\bar j))).$$
\section{implicit function theorem for time-delay algebraic equations}\label{sec:3}
Now consider the time-delay algebraic equations $\lambda(\bx)=\lambda(\bx_1,\bx_2)=0$ with $\lambda\in \mathcal K^p$.  The differentials of $\lambda$ satisfy $\rd\lambda(\bx)=T_1(\bx,\delta)\rd x_1+T_2(\bx,\delta)\rd x_2 =0$. If $T_1(\bx,\delta)\in \mathcal K^{p\times p}(\delta]$ is unimodular, then   $\rd x_1=-T^{-1}_1T_2(\bx,\delta)\rd x_2$. Thus by Poincar\'e lemma, there exist functions $\gamma\in \mathcal K^p$ such that $x_1=\gamma(\bx_2)$. The last analysis  explains why we can get $x_1$ as a function of $x_2$ and $x_2(-1)$ from $a(\bx)=0$ in section \ref{sec:1}, clearly, $\rd a=x_2(-1)\rd x_1+(x_1\delta+x_2\delta+x_2(-1))\rd x_2$ and $x_2(-1)$ is unimodular.  To have a similar result for $b(\bx)=0$, we have to use bicausal changes of coordinates as shown in  Example \ref{Ex:alg1} below.
In general, we have the following theorem, in which items (i) and (ii) are taken from Theorem~2 of \cite{califano2016accessibility}, items (iii) and (iv) are new and serve to our problem. 

We    use the following condition  \textbf{(C)} for any   submodule $\mathcal N\subseteq \mathcal M$: 

 \textbf{(C)}:  $\mathcal N$ is closed and its right-annihilator is causal.

\begin{thm}\label{Thm:impl}
Consider  $p$-functions $\lambda_k(\bx)\in \mathcal K $, $1\le k\le p$, of the variables $x\in \mathbb R^n$ and its time-delays. Define the submodule $\mathcal L:=\kspan{  \rd \lambda_k(\bx), 1\le k \le p}$ and assume that $\dim \mathcal L=p$ over $\mathcal K(\delta]$. Then the following statements are equivalent:
\begin{itemize}
\item [(i)]  $\mathcal L$ satisfies \textbf{(C)}.
\item [(ii)] There exist $n-p$ functions $\theta_1(\bx),\ldots,\theta_{n-p}(\bx)$ such that $\kspan{\rd\lambda_1,\ldots,\rd \lambda_{p},\rd \theta_1,\ldots, \rd \theta_{n-p}}=\kspan{\rd x}$, i.e., $\tilde x=[\lambda_1(\bx),\ldots, \lambda_{p}(\bx), \theta_1(\bx),\ldots, \theta_{n-p}(\bx)]^T$ is a bicausal change of coordinates. 

\item [(iii)]   $L(\bx,\delta)\in \mathcal K^{p\times n}(\delta]$, where $\rd \lambda(\bx)=L(\bx,\delta)\rd x$ and $\lambda=  [\lambda_1,\ldots,\lambda_p]^T$, has a polynomial right-inverse, i.e., $\exists L^{\dagger}(\bx,\delta)\in \mathcal K^{n\times p}(\delta]$ such that $LL^{\dagger}={\rm I}_{p}$.

    \item [(iv)] There exists a bicausal change of coordinates $\smatrix{\tilde x_1\\
\tilde x_2}
 =\varphi(\bx)$ with $\tilde x_1\in \mathbb R^{p}$ and $\tilde x_2\in \mathbb R^{n-p}$ such that $L_1(\cdot,\delta)$ is unimodular and $L_2(\cdot,\delta)\not\equiv 0$, where $L_1(\tilde \bx_1,\tilde \bx_2,\delta)\rd \tilde x_1+L_2(\tilde \bx_1,\tilde \bx_2,\delta)\rd \tilde x_2=\rd \lambda(\tilde \bx_1,\tilde\bx_2)$.
\end{itemize}
\end{thm}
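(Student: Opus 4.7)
The plan is to lean on the equivalence (i)$\Leftrightarrow$(ii), which is Theorem~2 of \cite{califano2016accessibility}, and graft (iii) and (iv) onto it by closing the cycle (ii)$\Rightarrow$(iv)$\Rightarrow$(iii)$\Rightarrow$(i). Each arrow of the cycle is then a self-contained calculation in $\mathcal K(\delta]$, and only the last one requires genuine work.

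The implication (ii)$\Rightarrow$(iv) is immediate: taking $\tilde x_1 = \lambda(\bx)$ and $\tilde x_2 = \theta(\bx)$ yields $\lambda(\tilde \bx_1,\tilde \bx_2) = \tilde x_1$, so $L_1 = \mathrm{I}_p$ is unimodular; if one insists on the nondegeneracy clause $L_2\not\equiv 0$, a further unimodular reshuffle such as $\tilde x_1 \mapsto \tilde x_1 + E \tilde x_2$, with $E$ having a nonzero entry, preserves unimodularity of $L_1$ while forcing $L_2\not\equiv 0$. For (iv)$\Rightarrow$(iii), let $T(\bx,\delta)\in \mathcal K^{n\times n}(\delta]$ be the unimodular Jacobian of $\varphi$, so that $\rd \tilde x = T\rd x$ and $L = [L_1,L_2]\,T$ in $\bx$-coordinates. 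Then
$$L^{\dagger}(\bx,\delta) \;:=\; T^{-1}(\bx,\delta)\,\smatrix{L_1^{-1}(\cdot,\delta)\\ 0} \;\in\; \mathcal K^{n\times p}(\delta]$$
satisfies $L L^{\dagger} = [L_1,L_2]\smatrix{L_1^{-1}\\0} = \mathrm{I}_p$, giving the sought-after polynomial right-inverse.

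The real content is (iii)$\Rightarrow$(i), which I would split into closedness of $\mathcal L$ and causality of its right-annihilator. For closedness, take $\omega = A(\bx,\delta)\rd x \in \overline{\mathcal L}$, so that $\alpha A = \beta L$ for some $0\neq \alpha\in \mathcal K(\delta]$ and $\beta\in\mathcal K^{1\times p}(\delta]$. Right-multiplying by $L^{\dagger}$ gives $\beta = \alpha(A L^{\dagger})$, hence $\alpha A = \alpha (A L^{\dagger})\,L$; since $\mathcal K(\delta]$ is an Ore domain, left-cancellation of $\alpha$ yields $A = (A L^{\dagger})\,L$, so $\omega = (A L^{\dagger})\rd \lambda \in\mathcal L$. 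For the causal right-annihilator, set $P := \mathrm{I}_n - L^{\dagger} L \in \mathcal K^{n\times n}(\delta]$; because $L^{\dagger}$ and $L$ are polynomials in $\delta$ alone, $P$ is causal. The identity $LP = L - (L L^{\dagger})L = 0$ places every column of $P$ in the right-annihilator, while $L\tau = 0$ forces $P\tau = \tau - L^{\dagger}(L\tau) = \tau$, so the right-annihilator is exactly the column span of $P$; as $L$ has rank $p$ (it admits a right-inverse), the annihilator has rank $n-p$ over the Ore quotient field, and any $n-p$ linearly independent columns of $P$ furnish a causal basis.

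The step I expect to demand the most care is (iii)$\Rightarrow$(i), and specifically the two skew-ring manipulations inside it: the left-cancellation turning $\alpha A = \alpha(AL^\dagger)L$ into $A = (AL^\dagger)L$ must be justified via the zero-divisor-free Ore structure of $\mathcal K(\delta]$, and the passage from a causal generating family $\{P_j\}$ to a causal \emph{basis} of the right-annihilator requires knowing the rank of the idempotent $P$. Both are standard once invoked, but they are exactly the features that make genuine use of the \emph{polynomial} nature of $L^{\dagger}$; the rest of the cycle would work verbatim over the function field $\mathcal K$, and it is precisely this ring-theoretic rigidity that the two new items (iii) and (iv) are designed to capture.
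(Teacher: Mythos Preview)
Your cycle (ii)$\Rightarrow$(iv)$\Rightarrow$(iii)$\Rightarrow$(i), grafted onto the cited equivalence (i)$\Leftrightarrow$(ii), is correct, and your arguments for (ii)$\Rightarrow$(iv) and (iv)$\Rightarrow$(iii) are essentially the paper's own: the paper also perturbs $\tilde x_1=\lambda+\eta(\boldsymbol\theta)$ to force $L_2\not\equiv0$, and builds the same right-inverse $L^\dagger=\Psi^{-1}\smatrix{L_1^{-1}\\0}$ from the unimodular Jacobian $\Psi$ of $\varphi$.

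Where you genuinely diverge is (iii)$\Rightarrow$(i). The paper does not touch the idempotent $P=\mathrm I_n-L^\dagger L$; instead it invokes Lemma~4 of \cite{marquez2007new} to row-reduce $L^\dagger$ by a unimodular $U=\smatrix{U_1\\U_2}$ to $\smatrix{R\\0}$, argues from $U_1L^\dagger=R$ and $LL^\dagger=\mathrm I_p$ that $\smatrix{RL\\U_2}$ is unimodular, and then cites Theorem~13 of \cite{marquez2007new} to conclude that $\kspan{RL\,\rd x}=\mathcal L$ satisfies \textbf{(C)}. Your route is more elementary and fully self-contained: closedness by left-cancellation of $\alpha$ in the zero-divisor-free Ore ring, causality because the causal columns of $P$ generate the right-annihilator (any $\tau$ with $L\tau=0$ satisfies $\tau=P\tau$). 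What the paper's detour through \cite{marquez2007new} buys in exchange is the explicit unimodular completion $\smatrix{RL\\U_2}$, which is already halfway to item~(ii). One small caution on your last line: over a non-commutative Ore ring, ``any $n-p$ linearly independent columns of $P$'' need not form a \emph{basis} of the annihilator---a maximal independent subset of a generating set can generate a proper submodule of the same rank---but this refinement is unnecessary, since exhibiting the $n$ causal columns of $P$ as a generating family already establishes that the right-annihilator is causal.
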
 
\begin{rem}\label{Rem:1}
 The added condition (iii) is  in some cases easier to be checked than condition (i) because the right-annihilator could be rendered causal even some non-causal terms shows up in the initial calculation of the kernel. Take $\lambda=x_1(-1)x_2+x_2^2$ from Example~3.6 of \cite{califano2016accessibility}, in which it is claimed that the right-annihilator of $\mathcal L=\kspan{x_2\delta\rd x_1+(x_1(-1)+2x_2)\rd x_2}$ is \emph{not} causal because the non-causal  functions $r(\bx_{[-1,0]},\delta)=\smatrix{-2x_2(+1)-x_1\\x_2\delta}$ belongs to the   right-annihilator. However,  $L(\bx,\delta)=\smatrix{x_1\delta,& x_1(-1)+2x_2}$ has a polynomial right-inverse $$L^{\dagger}(\bx,\delta)=\smatrix{0\\ \frac{1}{x_1(-1)+2x_2}}$$ (for $x_1(-1)+2x_2\neq 0$). In fact, $r(\bx,\delta)$ can be rendered as $\smatrix{-1\\\frac{x_2}{x_1(-1)+2x_2}\delta}$ proving that the right-annihilator is actually causal. Indeed, choose $\theta=x_1$, we have $\smatrix{\lambda(\bx)\\ \theta(\bx)}$ is a bicausal change of coordinates since $\smatrix{\rd\lambda(\bx)\\ \rd\theta(\bx)} =\Theta(\bx,\delta)\smatrix{\rd x_1\\ \rd x_2}$ and $\Theta(\bx,\delta)=\smatrix{x_2\delta& x_1(-1)+2x_2\\
 1&0}$ is unimodular, hence by the equivalence of items (i) and (ii), the right-annihilator of $\mathcal L$ is indeed causal.
 \end{rem}
\begin{proof}
The proof of $(i)\Leftrightarrow (ii)$ can be found in \cite{califano2016accessibility} and \cite{marquez2007new}.

  (i)$\Rightarrow$(iii): Assume that item (i) holds, then by Lemma 12 of \cite{marquez2007new}, there exist  two (causal) unimodular matrices $P(\bx,\delta)\in \mathcal K^{p\times p}(\delta]$ and $Q(\bx,\delta)\in  \mathcal K^{n\times n}(\delta]$ such that   $P(\bx,\delta)L(\bx,\delta)Q(\bx,\delta)=[{\rm I}_p \ 0]$. It follows that  $L(\bx,\delta)Q(\bx,\delta)=[P^{-1}(\bx,\delta) \ 0]$ and thus $L(\bx,\delta)Q_1(\bx,\delta)=P^{-1}(\bx,\delta)$, where $Q=\smatrix{Q_1&Q_2}$ and $Q_1(\bx,\delta)\in \mathcal K^{n\times p}(\delta]$. Hence $L(\bx,\delta)Q_1(\bx,\delta)P(\bx,\delta)={\rm I}_p$ and $L^{\dagger}(\bx,\delta)=Q_1(\bx,\delta)P(\bx,\delta)$ is a polynomial right-inverse of $L(\bx,\delta)$.

(iii)$\Rightarrow$ (i): Assume that there exists $L^{\dagger}(\bx,\delta)\in \mathcal K^{n\times p}(\delta]$ such that $L(\bx,\delta)L^{\dagger}(\bx,\delta)={\rm I}_p$. Then by Lemma 4 of \cite{marquez2007new}, there always exists   a unimodular matrix $U(\bx,\delta)=\left[\begin{smallmatrix}
  U_1(\bx,\delta)\\
U_2(\bx,\delta)
 \end{smallmatrix}
 \right]\in \mathcal K^{n\times n}(\delta]$ such that  $ \smatrix{U_1(\bx,\delta)\\U_2(\bx,\delta)}L^{\dagger}(\bx,\delta)=\left[\begin{smallmatrix}
R(\bx,\delta)\\
 0
 \end{smallmatrix}
 \right]$ with $R(\bx,\delta)\in \mathcal K^{p\times p}(\delta]$ being of full rank over $\mathcal K(\delta]$. Then by $L(\bx,\delta)L^{\dagger}(\bx,\delta)={\rm I}_p$ and $U_1(\bx,\delta)L^{\dagger}(\bx,\delta)=R(\bx,\delta)$, we get that $U_1(\bx,\delta)=R(\bx,\delta)L(\bx,\delta)+T(\bx,\delta)U_2(\bx,\delta)$ for some matrix $T(x,\delta)\in \mathcal K^{p\times (n-p)}(\delta]$. It follows that $\left[\begin{smallmatrix}
  U_1(\bx,\delta)\\
U_2(\bx,\delta)
 \end{smallmatrix}
 \right]=\left[\begin{smallmatrix} R(\bx,\delta)L(\bx,\delta)+T(\bx,\delta)U_2(\bx,\delta)\\
 U_2(\bx,\delta) 
 \end{smallmatrix}
\right]=\left[\begin{smallmatrix}
{\rm I}&T(\bx,\delta)\\
 0&{\rm I} 
 \end{smallmatrix}
 \right]\left[\begin{smallmatrix}
 R(\bx,\delta)L(\bx,\delta) \\
 U_2(\bx,\delta) 
 \end{smallmatrix}
 \right]$ is unimodular and thus $\left[\begin{smallmatrix}
 R(\bx,\delta)L(\bx,\delta) \\
 U_2(\bx,\delta)
 \end{smallmatrix}
 \right]$ is unimodular as well. So $\kspan{R(\bx,\delta)L(\bx,\delta)\rd x}$ satisfies \textbf{(C)} by Theorem 13 of \cite{marquez2007new}. Notice that $\kspan{R(\bx,\delta)L(\bx,\delta)\rd x}$ and $\mathcal L$ have the same right-annihilator and  $\kspan{R(\bx,\delta)L(\bx,\delta)\rd x}\subseteq \mathcal L$. Hence   $\mathcal L$ is closed and its right-annihilator is causal. 
 
 (ii)$\Rightarrow$(iv): Assume that item (ii) holds. Define $\tilde x_1:=\lambda(\bx)+\eta(\boldsymbol{\theta}(\bx))$, where $\eta$ is any function in $\mathcal K^{p}$ of $\theta=[\theta_1,\ldots,\theta_{n-p}]^T$ and their delays, and $\tilde x_2:=\theta(\bx)$. Then  $\smatrix{\rd x_1\\ \rd x_2}=\smatrix{{\rm I}_p&E(\bx,\delta)\\0&{\rm I}_{n-p}}\Theta(\bx,\delta)\rd x$, where $ E(\bx,\delta)\rd \theta=E(\boldsymbol{\theta},\delta)\rd \theta=\rd\eta(\theta)$ and $\Theta(\bx,\delta)\rd x=\smatrix{\rd \lambda(\bx)\\ \rd \theta(\bx)}$. Since $\Theta(\bx,\delta)$ is unimodular as $\smatrix{\lambda(\bx)\\ \theta(\bx)}$ defines a bicausal change of coordinates, we have that $\smatrix{{\rm I}_p&E(\bx,\delta)\\0&{\rm I}_{n-p}}\Theta(\bx,\delta)$ is unimodular and $\smatrix{\tilde x_1\\\tilde x_2}$ defines a bicausal change of coordinates as well. Hence  by $\lambda(\tilde \bx_1,\tilde \bx_2)=\tilde x_1-\eta(\tilde\bx_2)$,  the matrix $L_1(\tilde\bx_1,\tilde\bx_2,\delta)={\rm I}_p$ is unimodular.
 
 (iv)$\Rightarrow$(iii): Suppose that item (iv) holds, then $\rd \lambda=L(\bx,\delta)\rd x=L(\bx,\delta)\Psi^{-1}(\bx,\delta)\Psi(\bx,\delta)\rd x=L(\bx,\delta)\Psi^{-1}(\bx,\delta)\smatrix{\rd \tilde x_1\\ \rd \tilde x_2}=\smatrix{L_1(\bx,\delta)&L_2(\bx,\delta)}\smatrix{\rd \tilde x_1\\ \rd \tilde x_2}$, where $\Psi(\bx,\delta)\rd x=\rd \varphi(\bx)$ and $\Psi(\bx,\delta)\in \mathcal K^{n\times n}(\delta]$ is unimodular. Because $L_1(\bx,\delta)$ is unimodular, we have
 $
 \smatrix{L_1(\bx,\delta)&L_2(\bx,\delta)}\smatrix{L^{-1}_1(\bx,\delta)\\0}=L(\bx,\delta)\Psi^{-1}(\bx,\delta)\smatrix{L^{-1}_1(\bx,\delta)\\0}={\rm I}_p.
 $
 It follows that $L^{\dagger}(\bx,\delta)=\Psi^{-1}(\bx,\delta)\smatrix{L^{-1}_1(\bx,\delta)\\0}$
  is a polynomial right-inverse of $L(\bx,\delta)$.
 
\end{proof}
The results of Theorem \ref{Thm:impl} can be  extended  to functions with dependent differentials via the results of (strong) integrability of left-submodules in \cite{kaldmae2015integrability}.  In the delay-free case \cite{conte2007algebraic}, for $s$-functions $ \lambda_k(x)\in \mathcal K$, $1\le k\le s$, if the rank of $\rd  \lambda$ over $\mathcal K$ is $p\le s$, then we can choose $p$-functions (whose differentials are independent over $\mathcal K$) from $ \lambda_k(x)$ as  new coordinates. While in the time-delay case, for functions with dependent differentials over $\mathcal K(\delta]$, even the conditions of Theorem \ref{Thm:impl} are satisfied, we can \emph{not} always choose $p$ functions from $\lambda_k(x,\delta)$ as new bicausal coordinates. For example, take $ \lambda_1(\bx_{1,[1]},\bx_{2,[2]})=   x_1(-1) +   x_2(-2)$ and $\lambda_2(\bx_{1,[1]},\bx_{2,[2]})=(x_1+ x_2(-1))(x_1(-1)+  x_2(-2))$, we have $\rd   \lambda_1=\delta \rd x_1+\delta^2\rd x_2$ and $\rd  \lambda_2=( x_1(-1) +x_2(-2) +(x_1+ x_2(-1)) \delta)\rd x_1+( ( x_1(-1)+ x_2(-2))\delta+   (x_1+  x_2(-1))\delta^2)\rd x_2$, it can be seen that $\rd \lambda_1$ and $\rd  \lambda_2$  are dependent over $\mathcal K(\delta]$, and the submodule $ {\mathcal L}=\kspan{\rd  \lambda_1,\rd \lambda_2}$ is closed and its right-annihilator $\kspan{\smatrix{\delta\\ -1}}$ is causal, but we can  \emph{not} choose  either $\lambda_1$  or  $\lambda_2$ as a new bicausal coordinate since neither $\kspan{\rd \lambda_1}$ nor $\kspan{\rd \lambda_2}$ is closed. Observe that we may still construct $\tilde \lambda=x_1+  x_2(-1)$ as a new bicausal  coordinate and $\tilde{\mathcal L} =\kspan{\rd \tilde \lambda}= {\mathcal L}$. In general,  the following results hold:
\begin{pro}\label{Pro:1}
Consider $s$-functions ${\lambda}_i(\bx)\in \mathcal K$, $1\le i\le s$, with $\dim {\mathcal L}=p\le s$, where $ {\mathcal L}:=\kspan{  \rd \lambda_k, 1\le k \le s}$.
If $ {\mathcal L}$ satisfies \textbf{(C)}, then we can find $p$-functions $\tilde \lambda_1,\ldots, \tilde\lambda_p\in \mathcal K$, which do not necessarily belong  to $\{ \lambda_1,\ldots, \lambda_s\}$, such that $  \tilde{\mathcal L} =\kspan{  \rd \tilde \lambda_k(\bx), 1\le k \le p}={\mathcal L}$ and $ \lambda(\bx)=[\lambda_1(\bx),\ldots,\lambda_s(\bx)]^T=0$ is equivalent to $\tilde \lambda(\bx)=[\tilde \lambda_1(\bx),\ldots,\tilde \lambda_p(\bx)]^T=0$, i.e., $x(t)$ satisfies $\lambda(\bx)=0$ if and only if it satisfies $\tilde\lambda(\bx)=0$.
\end{pro}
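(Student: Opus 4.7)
The plan is to proceed in three steps: first extract $p$ generators of $\mathcal L$ that are exact one-forms, then invoke Theorem~\ref{Thm:impl} to obtain a bicausal coordinate system in which the $\lambda_i$'s depend only on the new generators, and finally translate this ``functional dependence'' into the required set-theoretic equivalence $\{\lambda=0\}=\{\tilde\lambda=0\}$.

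\emph{Step 1 (building the $\tilde\lambda_k$'s).} Because $\dim\mathcal L=p$, some $p$ indices $i_1,\dots,i_p$ can be chosen so that $\rd\lambda_{i_1},\dots,\rd\lambda_{i_p}$ are independent over $\mathcal K(\delta]$. The submodule $\mathcal L_0:=\kspan{\rd\lambda_{i_j}}$ then has rank $p$ and is contained in $\mathcal L$, so $\overline{\mathcal L_0}=\overline{\mathcal L}=\mathcal L$ by the closedness part of \textbf{(C)}; in general $\mathcal L_0\subsetneq\mathcal L$. To promote $\mathcal L_0$ to an exact basis of $\mathcal L$, I would invoke the Smith-form procedure behind Lemma~12 of \cite{marquez2007new}: \textbf{(C)} yields a causal unimodular matrix $P(\bx,\delta)\in\mathcal K^{p\times p}(\delta]$ for which the rows of $P(\bx,\delta)\smatrix{\rd\lambda_{i_1}\\ \cdots\\ \rd\lambda_{i_p}}$ are again exact; I would name these rows $\rd\tilde\lambda_1,\dots,\rd\tilde\lambda_p$. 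Unimodularity of $P$ and closedness of $\mathcal L$ then give $\tilde{\mathcal L}:=\kspan{\rd\tilde\lambda_k}=\mathcal L$.

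\emph{Step 2 (bicausal coordinates).} Applying implication (i)$\Rightarrow$(ii) of Theorem~\ref{Thm:impl} to the $p$ functions $\tilde\lambda_k$, I would produce $n-p$ complementary functions $\theta_1,\dots,\theta_{n-p}$ such that $(\tilde\lambda,\theta)$ is a bicausal change of coordinates. In these new coordinates, every inclusion $\rd\lambda_i\in\mathcal L=\kspan{\rd\tilde\lambda_k}$ forces the coefficients of $\rd\theta_j(-m)$ in $\rd\lambda_i$ to vanish identically in $\mathcal K(\delta]$, so each $\lambda_i$ depends on the $\tilde\lambda_k$'s and their delays only; write $\lambda_i=\mu_i(\tilde{\boldsymbol\lambda})$ with $\mu_i\in\mathcal K$.

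\emph{Step 3 (equivalence of zero sets).} To obtain $\tilde\lambda=0\Rightarrow\lambda=0$, I would combine the representation $\rd\lambda_i=\sum_k a_{ik}(\bx,\delta)\rd\tilde\lambda_k$ with a telescoping antiderivation along the coordinate lines of the $\tilde\lambda_k(-m)$'s (which are genuine bicausal coordinates by Step~2), obtaining $\lambda_i=\sum_k M_{ik}(\bx,\delta)\tilde\lambda_k+c_i$ for some $M_{ik}\in\mathcal K(\delta]$ and constants $c_i$; evaluating at a single point of $\{\lambda=0\}$ pins $c_i=0$. The reverse direction is symmetric: from $\rd\tilde\lambda_k=\sum_i b_{ki}(\bx,\delta)\rd\lambda_i$ (valid because $\tilde{\mathcal L}=\mathcal L$) the same antiderivation yields $\tilde\lambda_k=\sum_i N_{ki}(\bx,\delta)\lambda_i+d_k$, and the constants $d_k$ can be absorbed into the choice of $\tilde\lambda_k$ (whose differentials are unaffected). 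The non-triviality hypothesis stated at the beginning of Section~\ref{sec:2} is used to discard the degenerate case in which some meromorphic coefficient would vanish identically.

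\emph{Main obstacle.} The delicate point is Step~3: the operator $\delta$ does not commute with multiplication by $\bx$-dependent coefficients, so naive term-by-term antiderivation fails, and one must instead argue by induction on the polynomial degree of the entries of $a_{ik}$ and $b_{ki}$, in the spirit of the iterative procedure announced in Section~\ref{sec:4}. Once this ``integration up to constants'' step is available, the rest of the proof is essentially bookkeeping.
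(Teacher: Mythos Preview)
Your Step~1 contains a genuine gap. Left-multiplying the column $\smatrix{\rd\lambda_{i_1}\\\cdots\\\rd\lambda_{i_p}}$ by a unimodular $P(\bx,\delta)\in\mathcal K^{p\times p}(\delta]$ does not change the submodule spanned by the rows: you still obtain generators of $\mathcal L_0$, never of its closure $\mathcal L$ when $\mathcal L_0\subsetneq\mathcal L$. The Smith form of Lemma~12 in \cite{marquez2007new} puts an \emph{already given} generating matrix of a closed submodule into the shape $[{\rm I}_p\ 0]$; it does not manufacture exact one-forms generating a strictly larger module. The example displayed immediately before the proposition is designed to show this failure: with $\lambda_1=x_1(-1)+x_2(-2)$ one has $\delta\,\rd\tilde\lambda=\rd\lambda_1$ for $\tilde\lambda=x_1+x_2(-1)$, so $\rd\tilde\lambda\notin\kspan{\rd\lambda_1}$ and no unimodular $1\times 1$ left factor (a unit of $\mathcal K$) applied to $\rd\lambda_1$ can reach $\rd\tilde\lambda$. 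The missing ingredient is precisely the one the paper invokes: Lemma~2 of \cite{kaldmae2015integrability}, which says that the closure of a (strongly) integrable submodule is again (strongly) integrable. That is what guarantees exact one-forms $\rd\bar\lambda_1,\dots,\rd\bar\lambda_p$ generating $\mathcal L$ itself.

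Your Step~3 is also more complicated than necessary and the ``telescoping antiderivation'' is not what actually works here. Once Step~2 gives bicausal coordinates $(\bar\lambda,\theta)$ and you know each $\lambda_i$ depends only on $\bar\lambda$ and its delays, the paper simply evaluates $\lambda$ at a \emph{constant} profile $\bar\lambda_k\equiv c_k$, solves $\lambda(c_1,\dots,c_p)=0$ for the constants, and sets $\tilde\lambda_k:=\bar\lambda_k-c_k$. This bypasses any integration-in-$\delta$ argument and directly yields $\tilde\lambda=0\Rightarrow\lambda=0$; the converse direction is handled at the level of solution sets rather than via an inverse $\mathcal K(\delta]$-linear representation. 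You correctly flagged the noncommutativity of $\delta$ as an obstacle to your antiderivation scheme, but the real lesson is that this scheme is avoidable; the hard step is Step~1, not Step~3.
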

\begin{proof}
Choose any $p$-functions $ \lambda_1(\bx),\ldots, \lambda_p(\bx)$ from $\lambda(\bx)$ such that the differentials $\rd \lambda_k$, $1\le k\le p$,  are independent over $\mathcal K(\delta]$.
Then the submodule $\kspan{\rd \lambda_1,\ldots,\rd\lambda_p}$ is (strongly) integrable in the sense of \cite{kaldmae2015integrability}. Thus its closure $\overline{\kspan{\rd\lambda_1,\ldots,\rd \lambda_p}}$, which coincides with ${\mathcal L}$ (because ${\mathcal L}$ is closed), is (strongly) integrable as well by Lemma 2 of \cite{kaldmae2015integrability}. So there exist  $p$-functions $ \bar\lambda_1,\ldots, \bar\lambda_p$ such that $\kspan{  \rd  \bar\lambda_1, \ldots,  \rd \bar \lambda_p}=\overline{\kspan{\rd \lambda_1,\ldots,\rd\lambda_p}}={\mathcal L}$. However, it is not necessarily true that $\lambda(\bx)=0$ if and only if $\bar \lambda(\bx)=0$. Since ${\mathcal L}$ satisfies \textbf{(C)},  we can choose $x_1=\bar\lambda_1$, $\ldots$, $x_p=\bar\lambda_p$, $x_{p+1}= \theta_1$, $\dots$, $x_n=\theta_{n-p}$ as new bicausal coordinates by Theorem ~\ref{Thm:impl}. It follows that $\lambda_k$, $1\le k\le s$ depends only on $(x_1,\dots,x_p)$ and their delays, i.e., $\lambda=\lambda(\bx_1,\dots,\bx_p)$. For $1\le k\le p$, fix $x_k=x_k(-1)=\cdots=x_k(-\bar j)=c_k$, where $c_k$ are  constants, and solve the algebraic equations  $\lambda(c_1,\dots,c_p)=0$. 
Then by setting $\tilde \lambda_k=x_k-c_k=\bar \lambda_k(\bx)-c_k$, $1\le k\le p$, we have $\lambda(\bx_1,\dots,\bx_p)=0$ if and only if $\tilde \lambda(\bx_1,\dots,\bx_p)=0$.
\end{proof} 
We are now ready to present a generalization of the implicit function theorem for time-delay algebraic equations.
\begin{cor}[implicit function theorem]\label{Cor:2}
Consider $s$-algebraic equations $\lambda(\bx)=0$ and ${\mathcal L}:=\kspan{  \rd \lambda_k, 1\le k \le s}$. Let $\dim \mathcal L=p\le s$, if ${\mathcal L}$ satisfies \textbf{(C)}, then there exists a  bicausal change of coordinates $\smatrix{\tilde x_1\\
\tilde x_2}
 =\varphi(\bx)$  with $\tilde x_1\in \mathbb R^{p}$ and $\tilde x_2\in \mathbb R^{n-p}$ such that $\lambda(\tilde \bx_1,\tilde \bx_2)=0$  implies $\tilde x_1=\eta(\tilde\bx_2)$. 
\end{cor}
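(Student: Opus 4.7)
The plan is a three-step combination: use Proposition~\ref{Pro:1} to replace the $s$ possibly dependent equations by $p$ equations with independent differentials and the same zero set, invoke the implication (i)$\Rightarrow$(iv) of Theorem~\ref{Thm:impl} on the new generators to obtain a bicausal change of coordinates making the corresponding Jacobian block $L_1$ unimodular, and finally integrate via the Poincar\'e-lemma argument already spelled out in the paragraph preceding Theorem~\ref{Thm:impl}. The preprocessing step is essential: Theorem~\ref{Thm:impl} presupposes that $\mathcal L$ is generated by exactly $p$ one-forms with $\mathcal K(\delta]$-independent differentials, whereas in Corollary~\ref{Cor:2} we are only told that $\dim\mathcal L = p\le s$.

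Concretely, applying Proposition~\ref{Pro:1} gives functions $\tilde\lambda_1,\dots,\tilde\lambda_p\in\mathcal K$ such that $\tilde{\mathcal L}:=\kspan{\rd\tilde\lambda_k,\,1\le k\le p}=\mathcal L$ and $\lambda(\bx)=0$ if and only if $\tilde\lambda(\bx)=0$. Since $\tilde{\mathcal L}=\mathcal L$ still satisfies \textbf{(C)} and its generators have independent differentials, the implication (i)$\Rightarrow$(iv) of Theorem~\ref{Thm:impl} applies to $\tilde\lambda$, producing a bicausal change of coordinates $\smatrix{\tilde x_1\\ \tilde x_2}=\varphi(\bx)$ with $\tilde x_1\in\mathbb R^p$, $\tilde x_2\in\mathbb R^{n-p}$, such that $\rd\tilde\lambda(\tilde\bx_1,\tilde\bx_2)=L_1(\cdot,\delta)\rd\tilde x_1+L_2(\cdot,\delta)\rd\tilde x_2$ with $L_1$ unimodular over $\mathcal K(\delta]$. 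On the zero set $\{\tilde\lambda(\tilde\bx_1,\tilde\bx_2)=0\}$ one then has $\rd\tilde x_1=-L_1^{-1}L_2\rd\tilde x_2$, and the Poincar\'e-lemma reasoning recalled just before Theorem~\ref{Thm:impl} yields $\tilde x_1=\eta(\tilde\bx_2)$ for some $\eta\in\mathcal K^p$. The zero-set equivalence from Proposition~\ref{Pro:1} then transfers the conclusion from $\tilde\lambda$ back to $\lambda$, which is what the corollary asks for.

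The main subtlety I anticipate is the causality of the final relation: the notation $\eta(\tilde\bx_2)$ in the statement forbids forward time-shifts of $\tilde x_2$. This is precisely where unimodularity, rather than mere $\mathcal K(\delta]$-invertibility, of $L_1$ is needed, because unimodularity guarantees $L_1^{-1}\in\mathcal K^{p\times p}(\delta]$, so $-L_1^{-1}L_2$ is still a polynomial in $\delta$ and the one-form $\rd\tilde x_1=-L_1^{-1}L_2\rd\tilde x_2$ stays within the causal module; integration then lands in $\mathcal K$. Apart from this point, the proof is essentially bookkeeping, since the heavy lifting is already done by Proposition~\ref{Pro:1} and by the equivalence chain inside Theorem~\ref{Thm:impl}.
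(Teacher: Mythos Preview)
Your proposal is correct and follows essentially the same route as the paper's own proof: invoke Proposition~\ref{Pro:1} to reduce to $p$ independent generators with the same zero set, apply (i)$\Rightarrow$(iv) of Theorem~\ref{Thm:impl} to obtain the bicausal change of coordinates with $L_1$ unimodular, and conclude via the Poincar\'e lemma. Your added remark on why unimodularity of $L_1$ keeps the resulting $\eta$ causal is a welcome clarification that the paper leaves implicit.
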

\begin{proof}
If $p<s$, then we use the results of Proposition \ref{Pro:1} to replace $ \lambda(\bx)=0$ by $\tilde \lambda(\bx)=0$. Because $\tilde {\mathcal L}=\mathcal L$ satisfies item (i) of Theorem~\ref{Thm:impl}, we have $\rd \tilde x_1=L^{-1}_1L_2(\tilde\bx_1,\tilde\bx_2,\delta)\rd \tilde x_2$ by item (iv). Hence by Poincar\'e lemma, there always exist functions  $\eta\in \mathcal K^p$ such that $ \tilde x_1=\eta(\tilde \bx_2)$.
\end{proof}
\begin{rem}\label{Rem:2}
The result of Corollary \ref{Pro:1} is sufficient but not necessary, take the following example, $\lambda(\bx_{1,[1]},\bx_{2,[1]})=(x_1+x_1(-1))/x_2(-1)+e=0$ with a constant $e\neq 0$, we have $$\rd \lambda=(\frac{1}{x_2(-1)}+\frac{1}{x_2(-1)}\delta)\rd x_1-(\frac{x_1+x_1(-1)}{x^2_2(-1)}\delta)\rd x_2.$$ 
It can be  seen  by using Lemma \ref{Lem:1} below that the right-annihilator of $\mathcal L=\kspan{\rd \lambda}$ is \emph{not} causal ($\Delta x_1=x_1(+1)$ is not causal). However,  $\lambda(\bx)=(x_1+x_1(-1))/x_2(-1)+e=0$ is equivalent to $\hat\lambda(\bx)=(x_1+x_1(-1))+ex_2(-1)=0$ (for $x_2(-1)\neq 0$), and $\kspan{\rd \hat\lambda}$ satisfies \textbf{(C)}. In fact, by the bicausal change of coordinates $\smatrix{\tilde x_1\\\tilde x_2}=\smatrix{x_1\\ x_1+ex_2}$, we have $\hat\lambda(\tilde \bx_1, \tilde \bx_2)=\tilde x_1+\tilde x_2(-1)=0$ implying that $\tilde x_1=-\tilde x_2(-1)$. Observe that  $\mathcal L$ does not satisfy \textbf{(C)} for all $x(t)$ but it satisfies \textbf{(C)} for all $x(t)$ such that $\lambda(\bx)=0$ because $\mathcal L$ restricted to $\{\bx\,|\, \lambda(\bx)=0\}$ is $\kspan{\frac{1+\delta}{x_2(-1)}\rd x_1+\frac{e}{x_2(-1)}\delta\rd x_2}$, which coincides with $\kspan{\rd \hat \lambda}$ and satisfies  \textbf{(C)}. Remark that when and how 
we can find $\hat\lambda$ in the general case is an interesting problem, but we will not discuss that in details as the purpose of  the remaining note is to show how to check the condition  of Corollary \ref{Pro:1} (section \ref{sec:4}) and to use it to solve DDAEs (section \ref{sec:5})
\end{rem}

\section{An  algorithm for checking  the condition  of the implicit function theorem}\label{sec:4}
To construct the right-annihilator of a left-submodule is, in general, not an easy task (see e.g., Remark \ref{Rem:1}), which makes the conditions of Theorem \ref{Thm:impl} and Corollary \ref{Cor:2}  difficult to be checked.  
 A conventional way to find  the kernel of a polynomial matrix-valued function $L(\bx,\delta)\in \mathcal K^{p\times n}(\delta]$ is to transform $L(\bx,\delta)$ into its Smith canonical form  $Q(\bx,\delta)L(\bx,\delta)P(\bx,\delta)=[L_1(\bx,\delta),0]$ by two (causal) unimodular matrices $Q$ and $P$ (see e.g., \cite{marquez2007new,califano2014coordinates,califano2016accessibility}). However, the existence of (causal) unimodular matrices to transform $L(\bx,\delta)$ into its Smith canonical form  requires already its kernel to be causal \cite{marquez2007new}. Therefore, the necessity of item (i) of Theorem \ref{Thm:impl} is uncheckable by the last method, i.e., if the kernel of $L(\bx.\delta)$ is not causal, we can not transform $L(\bx.\delta)$ into its Smith form via (causal) unimodular matrices in order to verify if the kernel is indeed not causal.

The following lemma provides some easily checkable necessary conditions  for the causality of the right-annihilator of
 a submodule generated by the differential of a function.
Consider a function $\lambda(\bz_{[0,\bar j]})\in \mathcal K$ of the variables $z=[z_1,\ldots,z_q]^T\in \mathbb R^q$ and its time-delays. Let   $\alpha\rd z= [\alpha_1,\ldots,\alpha_q]\rd z=\rd \lambda$, where $\alpha_i(\bz,\delta)=\sum\limits^{\bar j_i}_{j=0}{\alpha^{j}_i(\bz)}\delta^{j}\in \mathcal K(\delta]$, $1\le i \le q$, and denote $\bar j=\deg(\alpha)=\max\{ \bar j_i, 1\le i\le q\}$.
\begin{lem}\label{Lem:1}
If the right-annihilator of $\alpha(\bz,\delta)$  is causal, then 
there exists a permutation  of $\alpha_i$ (by that of $z_i$) such that $\alpha_1\not\equiv 0$, $0\le \bar j_1\le \bar j_2$ and the right-annihilator of $[\alpha_1(\bz,\delta),\alpha_2(\bz,\delta)]$ is causal as well.
Moreover, if that is causal, then   rewrite
$$[\alpha_1,\alpha_2]=[\bar \alpha_1,\bar \alpha_2]+\alpha^{\bar j_1}_1[\hat \alpha_1,\hat \alpha_2],$$
where $\hat \alpha_1(\delta)=\delta^{\bar j_1}$, $\hat \alpha_2(\bz,\delta)=\frac{\alpha^{\bar j_2}_2(\bz)}{\alpha^{\bar j_1}_1(\bz)}\delta^{\bar j_2}$,  we have that 
\begin{itemize}
\item [(i)] the delays of the variables $\bz$ of   $\hat\alpha_2(\bz,\delta)=\hat\alpha_2(\bz_{[\bar j_1,\bar j]},\delta)$ are at least $\bar j_1$, i.e., $ [\Delta^{\bar j_1}\hat \alpha_1, \Delta^{\bar j_1}\hat \alpha_2]=[1,\Delta^{\bar j_1}\hat \alpha_2]$ is causal.
    \item[(ii)] Let $(\xi_1,\xi_2)=\bz_{[\bar j_1,\bar j]}$ with  $\xi_2=(z_1(-\bar j_1), z_2(-\bar j_{2}))$. Then  by fixing $\xi_1$ as constants,  the codistribution $$\mathcal D:=\sspan{\rd z_1(-\bar j_1)+ \frac{\alpha^{\bar j_2}_2(\bz_{[\bar j_1,\bar j]})}{\alpha^{\bar j_1}_1(\bz_{[\bar j_1,\bar j]})}\rd z_2(-\bar j_2)}$$ is integrable. That is,  
    there exists a function $\hat \lambda(\bz_{[\bar j_1,\bar j]})\in \mathcal K$
    such that   
    \begin{align}\label{Eq:dist}
    \mathcal D=\sspan{\frac{\partial \hat \lambda(\bz_{[\bar j_1,\bar j]})}{\partial \xi_2}\rd \xi_2}.
    \end{align}

    \item[(iii)]  $\tilde z=\varphi(\bz)=[\tilde z_1,\ldots,\tilde z_q]^T$, where $$\tilde z_1=\Delta^{\bar j_1}\hat \lambda(\bz_{[\bar j_1,\bar j]}), \ \ \tilde z_2=z_2, \ \ \ldots, \ \ \tilde z_q=z_q,$$ defines a bicausal change of coordinates and $\alpha\rd z$ under $\tilde z$-coordinates, i.e., $\tilde \alpha\rd \tilde z=[\tilde \alpha_1,\ldots,\tilde \alpha_q]\rd \tilde z$  with $\tilde \alpha(\tilde{\bz},\delta)=\alpha(\bz,\delta)\Psi^{-1}(\bz,\delta)$, where $\Psi(\bz,\delta)\rd z=\rd \varphi(\bz)$, satisfies $\deg (\tilde \alpha_1)=\bar j_1$, $\deg (\tilde \alpha_2)<\bar j_2$ and $\deg (\tilde \alpha_i)=\bar j_i$ for $3\le i\le q$, that is, the polynomial degree of $\alpha_2$ is reduced by the bicausal coordinates change.
\end{itemize}

\end{lem}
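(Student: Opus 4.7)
The plan is to handle the permutation claim first and then parts (i), (ii), (iii) in sequence. For the permutation I would exploit that, by the Smith-form reduction of \cite{marquez2007new}, causality of the right-annihilator of $\alpha$ produces causal unimodular $P, Q$ with $P\alpha Q = [L, 0, \ldots, 0]$; the last $q-1$ columns of $Q$ then provide a causal basis of the annihilator, from which a routine argument extracts 2-support generators $[\beta_1^{(j)}, 0, \ldots, -\beta_j^{(j)}, \ldots]^T$ for each $j$. I would then pick $\alpha_1$ nonzero of minimal degree (so $\bar j_1 \le \bar j_i$ for every nonzero $\alpha_i$) and relabel so that $\alpha_2$ is the partner of $\alpha_1$ in one such generator; then $\alpha_1 \beta_1 = \alpha_2 \beta_2$ with causal $\beta_1, \beta_2$, so the pair's right-annihilator is causal.

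For part (i), I would expand this Ore relation in powers of $\delta$ using the rule $\delta^j f(\bz) = f(\bz(-j)) \delta^j$ and match leading coefficients to obtain
\[
\alpha_1^{\bar j_1}(\bz)\, \beta_1^{\deg \beta_1}(\bz(-\bar j_1)) = \alpha_2^{\bar j_2}(\bz)\, \beta_2^{\deg \beta_2}(\bz(-\bar j_2)).
\]
Since both $\beta_k^{\deg \beta_k}$ are causal, the right-hand side depends only on $\bz$ with delays $\ge \bar j_1$; rearranging shows that $\phi := \alpha_2^{\bar j_2}/\alpha_1^{\bar j_1}$ does as well, which gives (i). I would also note that $\alpha_1^{\bar j_1}, \alpha_2^{\bar j_2}$ inherit from $\lambda$ a dependence on $z_1$ only through $z_1(-k)$ for $k \le \bar j_1$; combined with (i), $\phi$ actually depends on $z_1$ only through $z_1(-\bar j_1)$, a refinement that will be crucial below.

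For (ii), I would fix $\xi_1$ as constants. The one-form $\omega = \rd z_1(-\bar j_1) + \phi\, \rd z_2(-\bar j_2)$ then lives on a 2-dimensional slice with coordinates $\xi_2$, on which $\rd\omega \wedge \omega$ vanishes automatically (as a 3-form on a 2-manifold). Frobenius produces an integrating factor $\mu$ and a primitive $\hat\lambda$ satisfying $\partial \hat\lambda/\partial z_1(-\bar j_1) = \mu$ and $\partial \hat\lambda/\partial z_2(-\bar j_2) = \mu\phi$. Using the refinement from (i), I would choose $\mu$ and the integration constant so that $\hat\lambda$ is independent of $z_1(-k)$ for $k > \bar j_1$.

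For (iii), with this choice the differential $\rd\tilde z = \Psi(\bz,\delta)\,\rd z$ is upper-triangular with an identity block below the first row, and the $(1,1)$-entry reduces to $\Psi_{11} = \Delta^{\bar j_1}(\mu) \in \mathcal K \setminus \{0\}$, hence a unit in $\mathcal K(\delta]$. So $\Psi$ is unimodular and $\varphi$ is a bicausal change of coordinates. A direct computation of $\tilde\alpha = \alpha \Psi^{-1}$ in the Ore ring, using the commutation $\delta^{\bar j_1} \cdot \Delta^{\bar j_1}(f) = f \cdot \delta^{\bar j_1}$, shows that the $\delta^{\bar j_2}$-coefficient of $\tilde\alpha_2$ equals $\alpha_2^{\bar j_2} - \alpha_1^{\bar j_1}\phi = 0$, giving $\deg \tilde\alpha_2 < \bar j_2$. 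For $i \ge 3$, because $\phi$ (and hence $\hat\lambda$) depends on $z_i$ only through $\bz_{[\bar j_1, \bar j_i]}$, $\Psi_{1i}$ has $\delta$-degree at most $\bar j_i - \bar j_1$, yielding $\deg\tilde\alpha_i \le \bar j_i$ with generic equality (no forced cancellation). The hardest step will be (iii): one must simultaneously secure $\Psi_{11}$ as a unit (which is exactly where the refined (i) enters) and manage the non-commutative arithmetic of $\mathcal K(\delta]$ to verify the precise cancellation in the leading coefficient of $\tilde\alpha_2$.
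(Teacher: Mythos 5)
Your handling of the opening permutation claim has a genuine gap. After producing a causal basis of the right-annihilator of $\alpha$ from the Smith form, you assert that ``a routine argument extracts 2-support generators $[\beta_1^{(j)},0,\ldots,-\beta_j^{(j)},\ldots]^T$ for each $j$''. That extraction is precisely the nontrivial content of the claim, and in the ``for each $j$'' form it is false: take $\alpha=[\delta,\ z_3\delta,\ 1]$. Its right-annihilator is generated by the causal vectors $[1,0,-\delta]^T$ and $[0,1,-z_3\delta]^T$, hence causal, yet the right-annihilator of the pair $[\alpha_1,\alpha_2]=[\delta,\ z_3\delta]$ is generated by $[z_3(+1),-1]^T$, and every element of it has the form $[z_3(+1)c,\ -c]^T$ with $c\in\mathcal K(\delta]$, so it contains no nonzero causal element at all. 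Thus causal two-supported annihilating vectors exist only for suitably chosen pairs, and producing or selecting such a pair is exactly where the work lies: eliminating $q-2$ components from a causal basis requires right-multiplication by cofactors in the Ore ring, and these can reintroduce forward shifts, so causality is not inherited ``routinely''. The paper establishes the claim by a different, contradiction-based argument: it takes a generator of the right-annihilator of every pair $[\alpha_r,\alpha_s]$, embeds these as vectors $\tau_l\in\mathcal K^q(\delta]$, shows that the module they span has rank $q-1$, is closed, and therefore coincides with the right-annihilator of $\alpha$, and concludes that if no pairwise generator were causal then the right-annihilator of $\alpha$ could not be causal either. You need an argument of this kind (or another genuine one). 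A secondary point: the causal Smith-form reduction of \cite{marquez2007new} that you invoke formally requires closedness in addition to causality of the kernel, whereas Lemma~\ref{Lem:1} assumes only causality; this can be repaired by passing to the closure of the submodule generated by $\alpha\rd z$, but it should be addressed.

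Once a pair with causal right-annihilator is fixed, the rest of your proposal is sound and essentially follows the paper's own route: the leading-coefficient matching in the Ore relation for (i), the two-variable exactness/Frobenius argument with $\xi_1$ frozen for (ii), and for (iii) the triangular $\Psi$ with degree-zero unit $(1,1)$ entry together with the cancellation of the $\delta^{\bar j_2}$ coefficient of $\tilde\alpha_2$ all match the paper's computations. Your explicit remark that $\alpha_2^{\bar j_2}/\alpha_1^{\bar j_1}$ depends on $z_1$ only through $z_1(-\bar j_1)$ (because $\lambda$ involves $z_1$ only up to delay $\bar j_1$) is a point the paper leaves implicit under ``by construction'' and is indeed what makes $\Psi_{11}$ a unit; noting it is a plus. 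Minor issues: for $i\ge 3$ the lemma asserts $\deg\tilde\alpha_i=\bar j_i$ while your argument only yields $\deg\tilde\alpha_i\le\bar j_i$ with generic equality (the paper is no more careful here), and you do not verify $\deg\tilde\alpha_1=\bar j_1$. These are negligible next to the missing pair-selection argument.
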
 
\begin{proof}
The proof of Lemma \ref{Lem:1} is given after that of Theorem \ref{Thm:alg}.
\end{proof} 
The above lemma shows a way to reduce the polynomial degree of the differential of a delayed function via  bicausal changes of coordinates. With the help of Lemma \ref{Lem:1} and inspired by the classical method to  transform  a polynomial matrix into its triangular normal form (or Hermite form, see e.g., \cite{gantmacher1959matrix}), we propose Algorithm \ref{Alg:1} below, which can be used to check the equivalent conditions of Theorem~\ref{Thm:impl} and to construct the desired complementary bicausal coordinates $(\theta_1,\ldots,\theta_{n-p})$.
\begin{algorithm} 
\caption{}\label{Alg:1}
\begin{algorithmic}[1]
\REQUIRE $\lambda_1(\bx),\ldots,\lambda_{p}(\bx)$
\ENSURE \emph{YES/NO}
\STATE Set $k \leftarrow 1$, $l\leftarrow 1$,  $q \leftarrow n$,   $z=[z_1,\ldots,z_q]^T\leftarrow [x_1,\ldots,x_n]^T$.
\IF{$k>1$}
\STATE  Fix $x_1,\ldots, x_{k-1}$ as constants, set $q\leftarrow n-k+1$ and set $z=[z_1,\ldots,z_q]^T\leftarrow [x_{k},\ldots,x_{n}]^T$ to regard 
$\lambda_k(\bx)=\lambda_k(\bx_1,\ldots, \bx_{k-1},\bz)=\lambda_k(\bz)$ as a function of $z$-variables and its time-delays.
\ENDIF
\STATE  Set  $\alpha(\bz,\delta)\rd z=\rd \lambda_k(\bz,\delta)$ to get   $\alpha=[\alpha_1,\ldots,\alpha_q]\in \mathcal K^q(\delta]$.
\STATE Find  a permutation matrix $P^l_k\in \mathbb R^{q\times q}$ such that  $\alpha_1\not\equiv 0$, $\bar j_1\le \bar j_2$ and $[\Delta^{\bar j_1}\hat\alpha_1, \Delta^{\bar j_1}\hat\alpha_2]$  is causal after $z\leftarrow P^l_kz$ and $\alpha\leftarrow \alpha(P^l_k)^{-1}$.
\IF{$\not\exists P^l_k$}
\RETURN \emph{NO}  
\ELSE 
\STATE Find $\hat \lambda(\bz_{[\bar j_1,\bar j]})\in \mathcal K$ such that (\ref{Eq:dist}) holds. 
\STATE Set $\tilde z_1\leftarrow\Delta^{\bar j_1}\hat \lambda$, $\tilde z_2\leftarrow z_2$, $\ldots$, $\tilde z_q\leftarrow z_q$.
\STATE Define a bicausal change  of $z$-coordinates $\tilde z=\varphi^l_k(\bz)=[\tilde z_1, \tilde z_2,\dots, \tilde z_q]^T\in\mathcal K^q$. 

\STATE Set $ \Psi_{k}(\bz,\delta)\rd z\leftarrow \rd \varphi^l_k(\bz)$, $\tilde \alpha(\bz,\delta)\leftarrow\alpha(\bz,\delta)\Psi_k^{-1}(\bz,\delta)$ and $z\leftarrow (\varphi^l_{k})^{-1}(\tilde \bz)$ to have 
  $\tilde \alpha(\tilde \bz,\delta)=[\tilde \alpha_1(\tilde \bz,\delta),\ldots,\tilde \alpha_q(\tilde \bz,\delta)]$ and $\lambda_k(\tilde \bz)$. 

\IF{$\exists 2\le i\le q: \tilde \alpha_i\not\equiv 0$} 
\STATE   Set  $\alpha\leftarrow \tilde \alpha$ and $z\leftarrow\tilde z$, $l\leftarrow l+1$ and go to line 5.
\ELSE 
\IF{$\deg(\tilde \alpha_1(\tilde \bz,\delta))\neq 0$} 
\RETURN \emph{NO}
\ELSE 
\IF{$k=p$}
\RETURN \emph{YES} 
\ELSE
\STATE Set  $[x_k,\ldots,x_n]^T\leftarrow [\tilde z_1,\dots,\tilde z_q]^T$.
\STATE $k\leftarrow k+1$, $l\leftarrow 1$.
 \STATE Go to line 2
 \ENDIF
  \ENDIF
 \ENDIF
\ENDIF
\end{algorithmic}
\end{algorithm} 
 \begin{thm}\label{Thm:alg}
The functions $\lambda_k(\bx)$,$1\le k\le p$, satisfy the equivalent conditions in Theorem \ref{Thm:impl} if and only if Algorithm \ref{Alg:1} returns to YES. Moreover, if Algorithm \ref{Alg:1} returns to YES, then  let $\tilde z_2,\ldots,\tilde z_q$ with $q=n-p+1$, be the functions from the last iteration, i.e., $[\tilde z_2,\ldots,\tilde z_q]^T=Q_p\varphi_p\circ\cdots\circ Q_1\varphi_{1}$, where, for each $1\le k\le p$,
\begin{align}\label{Eq:phik}
 \varphi_k=\varphi^{l_k}_k\circ P^{l_k}_k\cdots\varphi^2_k\circ P^2_k\varphi^1_k\circ P^1_k\in \mathcal K^{n-k+1} \end{align} 
and $Q_k=[0,{\rm I}_{n-k}]\in \mathbb R^{(n-k)\times (n-k+1)}$ selects the last $n-k$ rows of $\varphi_k$, and $l_k$ denotes the number of iterations for $\lambda_k$, we have that $[\lambda_1,\ldots, \lambda_{p}, \theta_1,\ldots, \theta_{n-p}]^T$ is a bicausal change of $x$-coordinates, where $\theta_1=\tilde z_2$, $\ldots$, $\theta_{n-p}=\tilde z_q$.
\end{thm}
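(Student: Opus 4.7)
The plan is to establish Theorem~\ref{Thm:alg} by induction on the outer index $k$, using Lemma~\ref{Lem:1} as the workhorse of the inner loop, and then to translate the output of the inner loop into the bicausal change of coordinates demanded by item~(ii) of Theorem~\ref{Thm:impl}. The inductive invariant carried across the outer loop is: at the start of outer iteration $k$, there is a bicausal change $\Phi_{k-1}:\bx\mapsto \tilde \bx$ of $x$-coordinates, equal to the composition of the $P^l_j$'s and $\varphi^l_j$'s produced during the first $k-1$ outer iterations (padded by the identity on the appropriate last coordinates at each step), such that $\lambda_j$ is a delay-free function of $\tilde x_j$ alone for $1\le j\le k-1$ and $\rd \lambda_j$ is a scalar multiple of $\rd \tilde x_j$ whose coefficient has polynomial degree $0$ in $\delta$. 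Line~3 of the algorithm is therefore well-defined because fixing $\tilde x_1,\ldots,\tilde x_{k-1}$ as constants merely slices away directions that are already rectified.

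For the sufficiency direction (YES implies the equivalent conditions of Theorem~\ref{Thm:impl}), I focus on outer iteration $k$. Each inner iteration applies Lemma~\ref{Lem:1}(iii): the permutation $P^l_k$ arranges $\bar j_1\le \bar j_2$ with $[\Delta^{\bar j_1}\hat\alpha_1,\Delta^{\bar j_1}\hat\alpha_2]$ causal, and $\varphi^l_k$ strictly decreases the polynomial degree of the second entry of $\tilde\alpha$ while leaving $\deg(\tilde\alpha_1)=\bar j_1$ and $\deg(\tilde\alpha_i)=\bar j_i$ for $i\ge 3$ unchanged. By re-permuting the next nonzero entry into second position and reducing, after finitely many inner iterations (since polynomial degrees are nonnegative integers and each cycle strictly reduces one of them without raising the others) every $\tilde\alpha_i$ with $i\ge 2$ vanishes. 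The test on line~17, $\deg(\tilde\alpha_1)=0$, then forces $\rd\lambda_k=c(\tilde\bz)\rd\tilde z_1$ with $c\in\mathcal K$; Poincar\'e's lemma implies $\lambda_k$ is a delay-free function of $\tilde z_1$ alone, so $\lambda_k$ may be swapped for $\tilde z_1$ as the new $k$-th coordinate without breaking bicausality, extending the invariant. After outer iteration $p$ the map $[\lambda_1,\ldots,\lambda_p,\theta_1,\ldots,\theta_{n-p}]^T$ with $\theta_j=\tilde z_{j+1}$ obtained from \eqref{Eq:phik} of the last iteration is a bicausal change of coordinates, realizing item~(ii) of Theorem~\ref{Thm:impl}.

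For the necessity direction I proceed contrapositively: any NO return must contradict the equivalent conditions. A NO at line~8 occurs when no permutation satisfies Lemma~\ref{Lem:1}'s hypothesis, in which case the right-annihilator of the current $\alpha(\bz,\delta)$ is non-causal. By the invariant the composed Jacobian $\Phi'_{k-1}$ is block-triangular unimodular with identity first block and $\rd\lambda_j$ aligned with $\rd\tilde x_j$ for $j<k$, so this non-causality lifts to the right-annihilator of $\mathcal L$ and contradicts \textbf{(C)}. A NO at line~18 occurs when $\tilde\alpha_i\equiv 0$ for $i\ge 2$ but $\deg(\tilde\alpha_1)>0$; then $\kspan{\rd \lambda_k}\subsetneq \kspan{\rd \tilde z_1}$ are rank-one submodules, the latter closed, so $\kspan{\rd\lambda_k}$ fails to be closed and the same invariant propagates the failure to $\mathcal L$, again contradicting the conditions of Theorem~\ref{Thm:impl}.

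The main technical obstacle I foresee is the passage between the single-form statements governing the inner loop and the joint statements about $\mathcal L$: namely, that causality of the right-annihilator and closedness of $\mathcal L$ are equivalent to the corresponding properties for the residual single-form submodule $\kspan{\rd\lambda_k}$ on the sliced space. This rests on the unimodularity of $\Phi'_{k-1}$ and on the block structure $\mathrm{diag}(I_{k-1},\Psi_k(\bz,\delta))$ of each incremental update, which together preserve and reflect causality and closedness. A secondary difficulty is termination of the inner loop, since a priori reducing $\deg(\tilde\alpha_2)$ could raise the degrees of other entries; Lemma~\ref{Lem:1}(iii) explicitly rules this out by certifying that only the second entry's degree strictly decreases while all other $\deg(\tilde\alpha_i)$ are preserved, which makes a lexicographic termination argument available.
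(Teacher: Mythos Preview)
Your overall strategy---induction on the outer index $k$, Lemma~\ref{Lem:1} driving each inner pass, and assembling a lower-triangular bicausal change---is exactly the paper's route. Two points, however, need repair.

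First, your inductive invariant is overstated. After outer iteration $j$ the algorithm only rewrites the \emph{last} $n-j+1$ coordinates (line~23), so $\lambda_j$ in the new coordinates is a function of $\tilde\bx_1,\ldots,\tilde\bx_{j-1}$ \emph{and} $\tilde x_j$ (delay-free only in the last slot), not of $\tilde x_j$ alone. In Example~\ref{Ex:alg1}(3), for instance, after stage $k=1$ one has $\lambda_2=\tilde x_1\tilde x_3+\tilde x_2(-1)\tilde x_3(-1)$, which still depends on $\tilde x_1$. The correct invariant is the lower-triangular shape $\rd\lambda_j=\sum_{i<j}\star\,\rd\tilde x_i+c_j\rd\tilde x_j$ with $\deg c_j=0$, and the ``swap $\tilde z_1\leftrightarrow\lambda_k$'' step is justified by unimodularity of this triangular matrix, not by $\lambda_k$ depending on $\tilde z_1$ only.

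Second---and this is the gap you yourself flag as the main obstacle---the lift from non-causality of $\ker\alpha$ to non-causality of the right-annihilator of $\mathcal L$ does \emph{not} follow from the block structure alone. The right-annihilator of $\mathcal L$ is contained in $\{0\}^{k-1}\times\ker\alpha$, but the rows $\rd\lambda_{k+1},\ldots,\rd\lambda_p$ cut it down further, and a proper submodule of a non-causal module can perfectly well be causal. The paper closes this gap by inserting the intermediate submodules $\mathcal L_k:=\kspan{\rd\lambda_1,\ldots,\rd\lambda_k}$: using the unimodular-completion characterization of \textbf{(C)} (Theorem~13 of \cite{marquez2007new}), \textbf{(C)} for $\mathcal L$ implies \textbf{(C)} for every $\mathcal L_k$, and \emph{then} your block-triangular argument applies cleanly because the annihilator of $\mathcal L_k$ in the current coordinates is exactly $\{0\}^{k-1}\times\ker\alpha$. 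The same device handles the line-18 \emph{NO}: non-closedness of $\kspan{\alpha\rd z}$ gives non-closedness of $\mathcal L_k$, hence failure of \textbf{(C)} for $\mathcal L_k$, hence for $\mathcal L$.
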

\begin{rem}
Algorithm \ref{Alg:1} and Theorem \ref{Thm:alg}  provide  
 another way to prove $(i)\Rightarrow (ii)$ of Theorem \ref{Thm:impl}, the original proof in \cite{califano2016accessibility} uses a contradiction   with the help of the extended Lie brackets. Algorithm~\ref{Alg:1} proves $(i)\Rightarrow (ii)$ by  directly constructing the complementary bicausal coordinates $(\theta_1,\ldots,\theta_{n-p})$ in Theorem \ref{Thm:impl} (ii)  using  condition \textbf{(C)} and Lemma 
 \ref{Lem:1}.
\end{rem}
 \begin{proof}[Proof of Theorem \ref{Thm:alg}]
``\emph{Only if}:'' Assume that $\mathcal L$ satisfies \textbf{(C)}. Then by Theorem 13 of \cite{marquez2007new}, the latter assumption is equivalent to that there exists a matrix $\Theta(\bx,\delta)\in \mathcal K^{(n-p)\times n}(\delta]$  such that $\smatrix{L(\bx,\delta)\\ \Theta(\bx,\delta) }$ is unimodular, where $L(\bx,\delta) \rd x=\rd \lambda(\bx)$. It follows that $\mathcal L_k:=\kspan{\rd \lambda_1,\ldots, \rd \lambda _k}$ for all $1\le k\le p$ satisfy \textbf{(C)}  because we can always find $\Theta_k$ such that $\smatrix{L_k(\bx,\delta)\\ \Theta_k(\bx,\delta) }$ is unimodular, where $L_k(\bx,\delta)\rd x= \smatrix{\rd \lambda_1(\bx)\\ \dots\\ \rd \lambda_k(\bx)}$. Remark that the  property that $\mathcal L_k$ satisfies \textbf{(C)} is invariant under bicausal changes of  coordinates. Now consider $k=1$, i.e., in each $1\le l \le l_1$-iteration  of Algorithm \ref{Alg:1}, the right-annihilator of $\mathcal L_1$ is causal and thus by Lemma \ref{Lem:1},  we can always find $P^l_1$ such that $\Delta^{\bar j_1}[\hat \alpha_1,\hat \alpha_2]$ is causal. By reducing the polynomial degree of $\alpha_2$ and permutations,  $\tilde \alpha_i$ for all $i\ge 2$ eventually become  $0$ at $l=l_1$. Moreover, $\deg(\tilde \alpha_1)=0$ for $l=l_1$
as $\mathcal L_1$ is closed. So the algorithm does not returns to \emph{NO} in the the first $l_1$-iterations. Suppose that the algorithm does not return to \emph{NO} for $k=1,\ldots,k^*-1$, i.e., after $(l_1+\cdots+l_{k^*-1})$-iterations, then $\smatrix{\lambda_1\\ \cdots\\ \lambda_{k^*}}$~becomes $\smatrix{\lambda_1(\bx_1)\\  \cdots \\ \lambda_{k^*-1}(\bx_1,\ldots,\bx_{k^*-1})\\  \lambda_{k^*}(\bx_1,\ldots,\bx_{k^*-1},\bz_{1},\ldots,\bz_{q}) }$ with $q=n-k^*+1$ and  
$
\smatrix{\rd\lambda_1 \\  \cdots \\ \rd \lambda_{k^*-1}\\  \rd \lambda_{k^*}}=\smatrix{c_1&0&\cdots&0&\cdots&0 \\ \vdots&\ddots&\vdots&\vdots&\cdots&\vdots \\
\star&\cdots&c_{k^*-1}&0&\cdots&0\\
\star&\cdots&\star&\alpha_1&\cdots&\alpha_q
}\smatrix{\rd x_1\\  \vdots \\ \rd x_{k^*-1}\\ \rd z_1\\ \vdots \\ \rd z_q},
$
where $c_k\not\equiv 0$ and $\deg(c_k)=0$ for all $1\le k\le k^*-1$, ``$\star$'' denotes some irrelevant terms.
Thus by $\mathcal L_{k^*}$ satisfies \textbf{(C)},   we get that  $\kspan{\alpha(\bz)\rd z}$ satisfies \textbf{(C)} (when fixing $(x_1,\ldots,x_{k^*-1})$ as constants) as well, which indicates that Algorithm \ref{Alg:1} does not return to \emph{NO} for $k=k^*$. Hence the algorithm returns to \emph{YES} once $k=p$.

``\emph{If}:'' Suppose that the algorithm returns to \emph{YES}. Then, we can construct the following bicausal changes of  $x$-coordinates:
$$
\psi_1=\varphi_1, \  \psi_2=\smatrix{M_1\psi_1\\ \varphi_2(N_1\psi_1)}, \ \ldots, \ \psi_p=\smatrix{M_{p-1}\psi_{p-1}\\ \varphi_p(N_{p-1}\psi_{p-1})}, 
$$
where $\varphi_k$, $1\le k\le p$, are defined by (\ref{Eq:phik}), $M_k=\smatrix{{\rm I}_k&0}\in \mathbb R^{k\times n}$ and $N_k=\smatrix{0&{\rm I}_{n-k}}\in \mathbb R^{(n-k)\times n}$. Indeed, each $\psi_k$ defines a bicausal change of coordinates on $\mathcal K^n$ because 
$\rd \psi_k=\smatrix{{\rm I}_k&0\\ \star &\Psi_k}\rd \psi_{k-1}$, where $\Psi_k(\bx_1,\ldots,\bx_{k-1},\bz,\delta)\rd z=\rd \varphi_k(\bx_1,\ldots,\bx_{k-1},\bz)$ and $\Psi_k$, by the constructions in Algorithm~\ref{Alg:1}, is unimodular. Then define the following bicausal change of coordinates $\tilde x=[\tilde x_1,\ldots,  \tilde x_n]^T=\psi(\bx)=\psi_p\circ\cdots\circ\psi_1(\bx)$, thus in $\tilde x$-coordinates  we have  that
$$
\smatrix{\rd\lambda_1 \\  \cdots  \\  \rd \lambda_{p}}=\smatrix{c_1&0&\cdots&0&\cdots&0 \\ \vdots&\ddots&\vdots&\vdots&\cdots&\vdots \\
*&\cdots&c_{p}&0&\cdots&0\\ 
}\smatrix{\tilde x_1\\ \vdots \\ \tilde x_p\\ \tilde x_{p+1} \\ \vdots \\ \tilde x_n},
$$
where $c_i=c_i(\tilde \bx)\not\equiv 0$ and $\deg(c_i)=0$ for  all $1\le i \le p$. It follows that $[\lambda_1,\ldots,\lambda_p,\tilde x_{p+1},\ldots,\tilde x_n]^T$ is a bicausal change of coordinates because $T(x,\delta)$, where $T\rd \tilde x=  \smatrix{\rd \lambda\\ \rd\tilde x_{p+1}\\ \vdots \\ \rd \tilde x_n}$, is   a unimodular matrix. Thus item (ii) of Theorem \ref{Thm:impl} holds with $\theta_1=\tilde x_{p+1}$, $\ldots$, $\theta_{n-p}=\tilde x_n$. Moreover, by using $\varphi_k$ and $Q_k$,  we can express $[\tilde x_{p+1},\ldots,\tilde  x_n]^T=Q_p\varphi_p\circ\cdots\circ Q_1\varphi_{1}$.
 \end{proof}
\begin{proof}[Proof of Lemma \ref{Lem:1}]
We need to prove that there exist two integers $1\le r\le q-1$ and  $r+1\le s\le q$ such that the right-annihilator of $[\alpha_r,\alpha_s]$ is causal. Suppose that   the right-annihilators of $[\alpha_r,\alpha_s]$ are not causal for all $1\le r\le q-1$, $r+1\le s\le q$.  Let $\smatrix{\beta_{l(r,s)}\\ \gamma_{l(r,s)}}\in \mathcal K^2(\delta]$, where $l(r,s)=(r-1)(q-\frac{r}{2})+s-r$ and $1\le l\le l^*= \frac{q(q-1)}{2} $, be a basis for the right-annihilator  of $[\alpha_r,\alpha_s]$, then define $\tau_{l}:=[0,\dots,0,\beta_{l},0,\dots,0,\gamma_{l},0,\dots,0]^T$, where $\beta_{l}$ and $\gamma_{l}$ are in the $r$-th and $s$-th rows of $\tau_{l}$, respectively. It follows that $\alpha\tau_{l}=0$ for all $1\le l\le l^*$. Thus the right-submodule $\mathcal T=\kspan{\tau_1,\ldots,\tau_{l^*}}$ is in the right-annihilator of $\kspan{\alpha\rd z}$, so $\dim \mathcal T\le q-1$. Recall that the right-annihilator of a left-submodule is always closed (see \cite{califano2021nonlinear}). By the construction,  $\mathcal T$ is closed and $\dim \mathcal T\ge q-1$, which implies $\mathcal T$ coincides with  the right-annihilator of $\kspan{\alpha\rd z}$ because they have the same dimension $q-1$ and are both closed. If  $\tau_{l}$, for all $1\le l\le \frac{q(q-1)}{2}$, are not causal, we have that the right-annihilator of $\alpha$ is not causal. Hence if the right-annihilator of $\alpha$ is causal, then there must exist $r,s$ such that $\tau_{l}$ is causal.
  
  (i) If the right-annihilator of  $[\alpha_1(\bz,\delta),\alpha_2(\bz,\delta)]$, generated by $\smatrix{\beta(\bz,\delta)\\ \gamma(\bz,\delta)}$,  is causal, then  the right-annihilator of $[\alpha^{\bar j_1}_1(\bz)\delta^{\bar j_1},\alpha^{\bar j_2}_2(\bz)\delta^{\bar j_2}]$ is also causal. Indeed, write $\beta(\bz,\delta)=\sum\limits^{\bar j_{\beta}}_{j=1}\beta^j(\bz)\delta^{j}$ and $\gamma(\bz,\delta)=\sum\limits^{\bar j_{\gamma}}_{j=1}\gamma^j(\bz)\delta^{j}$, we can deduce both $\bar j_1+\bar j_{\beta}=\bar j_2+\bar j_{\gamma}$ and $[\alpha^{\bar j_1}_1(\bz)\delta^{\bar j_1},\alpha^{\bar j_2}_2(\bz)\delta^{\bar j_2}]\smatrix{\beta^{\bar j_{\beta}}(\bz)\delta^{\bar j_{\beta}}\\ \gamma^{\bar j_{\gamma}}(\bz)\delta^{\bar j_{\gamma}}}=0$ from $\alpha_1\beta+\alpha_2\gamma=0$. It follows that the right-annihilator of $[\hat \alpha_1,\hat\alpha_2]$ is causal. Then because $\hat \alpha_1=\delta^{\bar j_1}$, by a direct calculation, the right-annihilator of $[\hat \alpha_1,\hat\alpha_2]$ is generated by $[\Delta^{\bar j_1}\hat\alpha_2,-1]^T$. Hence  $[\Delta^{\bar j_1}\hat \alpha_1, \Delta^{\bar j_1}\hat \alpha_2]=[1,\Delta^{\bar j_1}\hat\alpha_2]$ is causal.

(ii)
Let $(\tilde \xi_1,\xi_2)=\bz_{[0, \bar j]}$ and $\xi_2=(z_1(-\bar j_1), z_2(-\bar j_2))$.
If we fix  $\tilde \xi_1$ as constants, then $\lambda(\bz_{[0, \bar j]})=\lambda(\tilde \xi_1,\xi_2)=\lambda(\xi_2)$ can be seen as a function of $\xi_2$.
It follows that the one form $\hat \omega=  \alpha^{\bar j_1}_1(\tilde \xi_1,\xi_2) \rd z_1(-\bar j_1)+\alpha^{\bar j_2}_2(\tilde \xi_1,\xi_2) \rd z_2(-\bar j_1)$ is exact (by fixing $\tilde \xi_1$). Then by Frobenius theorem, the codistribution 
$
\sspan{  \rd z(-\bar j_1)+\hat \alpha^{\bar j_2}_2(\bz_{[\bar j_1,\bar j]})\rd z(-\bar j_2)}=\sspan{\hat \omega}
$, where $\hat \alpha^{\bar j_2}_2=\frac{ \alpha^{\bar j_2}_2}{ \alpha^{\bar j_1}_1}$ depends only on $\bz_{[\bar j_1,\bar j]}$ by item (i), is integrable when fixing  $\xi_1$, where $(\xi_1,\xi_2)=\bz_{[\bar j_1,\bar j]}$. Hence there exists a function $\hat \lambda=\hat \lambda(\bar z_{[\bar j_1,\bar j]})\in \mathcal K$ such that (\ref{Eq:dist}) holds

(iii) By construction, we have  $\rd \hat \lambda(\bz_{[\bar j_1,\bar j]})=\hat\beta(\bz_{[\bar j_1,\bar j]},\delta)  \rd z+c\hat \alpha_1(\bz_{[\bar j_1,\bar j]},\delta)  \rd z_1+c\hat \alpha_2(\bz_{[\bar j_1,\bar j]},\delta)  \rd z_2$ for some function $c=c(\bz_{[\bar j_1,\bar j]})\in \mathcal K$, where $\hat\beta=[\hat \beta_1,\ldots,\hat \beta_q]$, $\hat \beta_1\equiv 0$, $\deg(\hat\beta_2)\le \bar j_2-1$,  and for $3\le i\le q$,
$\deg(\hat\beta_i)= \bar j_i$ if $\bar j_i\ge \bar j_1$ and $\hat\beta_i\equiv 0$ if $\bar j_i< \bar j_1$. Now let  $\varphi(\bz)=[\Delta^{\bar j_1}\hat \lambda(\bz), z_2,\ldots,z_q]^T$ and $\Psi\rd z=\rd \varphi(\bz)$, we get $$\Psi=\smatrix{\Delta^{\bar j_1}(c\hat \alpha_1)& \Delta^{\bar j_1}(\hat \beta_2+c\hat \alpha_2)&\Delta^{\bar j_1}\hat\beta_3& \cdots&\Delta^{\bar j_1}\hat \beta_q\\
0&1&0&\cdots&0\\
0&0&1&\cdots&0\\
0&0&0&{\rm I}&0\\
0&0&0&\cdots&1
},$$ which is upper triangular and $\Delta^{\bar j_1}(c\hat \alpha_1)=\Delta^{\bar j_1}(c\delta^{\bar j_1})=c(+{\bar j_1})$ is of polynomial degree zero, and thus $\Psi(\bz,\delta)$ is unimodular and $\varphi(\bz,\delta)$ is a bicausal change of coordinates. Then we have $[\tilde \alpha_1,\ldots,\tilde \alpha_q]=\alpha \Psi^{-1}=$
$$
\smatrix{\bar \alpha_1+\alpha^{\bar j_1}_1\delta^{\bar j_1}\\\bar\alpha_2+\alpha^{\bar j_1}_1\hat \alpha_2\\ \alpha_3\\\cdots\\   \alpha_q}^T\smatrix{\frac{1}{c(+\bar j_1)}& -(\frac{\Delta^{\bar j_1}\hat \beta_2}{c(+\bar j_1)}+\Delta^{\bar j_1}\hat \alpha_2)&-\frac{\Delta^{\bar j_1}\hat \beta_3}{c(+\bar j_1)}& \cdots&-\frac{\Delta^{\bar j_1}\hat \beta_q}{c(+\bar j_1)}\\
0&1&0&\cdots&0\\
0&0&1&\cdots&0\\
0&0&0&{\rm I}&0\\
0&0&0&\cdots&1
}
$$
 By a direct calculation, we have $\tilde\alpha_1=\frac{\bar \alpha_1}{c(+\bar j_1)}+\frac{\alpha^{\bar j_1}_1}{c}\delta^{\bar j_1}$,  $\tilde\alpha_2=-(\bar \alpha_1+\alpha^{\bar j_1}_1\delta^{\bar j_1})\frac{\Delta^{\bar j_1}\hat \beta_2}{c(+\bar j_1)}-\bar \alpha_1\Delta^{\bar j_1}\hat \alpha_2+\bar \alpha_2$ and $\tilde \alpha_i=\alpha_i-(\bar \alpha_1+\alpha^{\bar j_1}_1\delta^{\bar j_1})(\frac{\Delta^{\bar j_1}\hat \beta_i}{c(+\bar j_1)})$ for $3\le i\le q$. Notice that $\deg(\bar \alpha_1+\alpha^{\bar j_1}_1\delta^{\bar j_1})=\bar j_1$, $\deg(\frac{\Delta^{\bar j_1}\hat \beta_2}{c(+\bar j_1)})\le \bar j_2-1-\bar j_1$ and $\deg(-\bar\alpha_1\Delta^{\bar j_1}\hat \alpha_2+\bar \alpha_2)\le \bar j_2-1$. Hence  $\deg (\tilde \alpha_1)=\bar j_1$, $\deg (\tilde \alpha_2)<\bar j_2$ and $\deg (\tilde \alpha_i)=\deg(\alpha_i)=\bar j_i$,  $\forall \, i\ge 3$.

\end{proof}
\begin{example}\label{Ex:alg1}
1).  Consider $b(\bx)=0$ in section \ref{sec:1}, we apply  Algorithm~\ref{Alg:1} to $b(\bx)$.  For $k=1$, $l=1$,  $$
\alpha=\smatrix{x_2(-1)+x_2x_2(-2)\delta,&x_1(-1)x_2(-2)+x_1\delta+x_1(-1)x_2\delta^2}.
$$
It is seen that $P^1_1={\rm I}_2$ and   $[\hat\alpha_1,\hat \alpha_2]=[\delta,\frac{x_1(-1)}{x_2(-2)}\delta^2]$. Thus $\Delta^{\bar j_1}[\hat\alpha_1,\hat \alpha_2]=[1,\frac{x_1}{x_2(-1)}\delta]$ with $\bar j_1=1$ is causal. Then $\sspan{\rd x_1(-1)+\frac{x_1(-1)}{x_2(-2)}\rd x_2(-2)}$ is integrable and we   find the function $\hat \lambda=x_1(-1)x_2(-2)$ satisfying (\ref{Eq:dist})   ($\xi_1$ is absent and $\xi_2=(x_1(-1),x_2(-2))$). The bicausal coordinate transformation is $\varphi^1_1=\smatrix{\tilde z_1\\\tilde z_2}=\smatrix{x_1x_2(-1)\\  x_2}$ as $\Delta^1\hat \lambda=x_1x_2(-1)$. Thus in $\tilde z=(\tilde z_1,\tilde z_2)$-coordinates,  $b=b(\tilde \bz_1,\tilde \bz_2)=\tilde z_1+\tilde z_1(-1)\tilde z_2$ and $\tilde \alpha=[1+\tilde z_2\delta,\tilde z_1(-1)]$. So $\tilde \alpha_2\not\equiv 0$ and we go to the second iteration (i.e., line 15$\to$line 5).
For $k=1$, $l=2$, we use the permutation matrix $P^2_1=\smatrix{0&1\\
1&0}$  to have $\tilde \alpha\smatrix{\rd \tilde z_1\\ \rd \tilde z_2}=\smatrix{\tilde z_1(-1)&1+\tilde z_2\delta}\smatrix{\rd \tilde z_2\\ \rd \tilde z_1}$. Define new coordinates  $\smatrix{\tilde x_1\\ \tilde x_2}=P^2_1\smatrix{\tilde z_1\\ \tilde z_2}$ to have $b(\tilde x_1,\tilde x_2)=\tilde x_2+\tilde x_2(-1)\tilde x_1+e_2$ and $\rd b=\smatrix{\tilde x_2(-1)&1+\tilde x_1\delta}\smatrix{\rd \tilde x_1\\ \rd \tilde x_2}$. Now although $1+\tilde x_1\delta\not\equiv 0$, we can already conclude that $b(\tilde x_1,\tilde x_2)$ satisfies item (iv) of Theorem~\ref{Thm:impl} without continuing the algorithm because $\tilde x_2(-1)$ is unimodular. Moreover, we get $\varphi_1=P^2_1\varphi^1_1$ by (\ref{Eq:phik}) and the complementary coordinate $\theta=Q_1\varphi_1=x_1x_2(-1)$. It can be checked that $\smatrix{b(\bx)\\\theta(\bx)}$ is indeed a bicausal change of coordinates. Moreover, by Corollary~\ref{Cor:2}, $b(\bx)=0$ implies   $\tilde x_1=\frac{-e_2-\tilde x_2}{\tilde x_2(-1)}$.

2). Consider $c(\bx)=0$ in section \ref{sec:1} and apply Algorithm~\ref{Alg:1} to $c(\bx)$. For $l=1$, $\alpha=[x_1(-1)+x_1\delta,x_2(-1)+x_2\delta]$ and $\hat\alpha=[\delta,\frac{x_2}{x_1}\delta]$, it is seen that $\bar j_1=1$ and $\Delta^1\hat\alpha=[1,\frac{x_1(+1)}{x_2(+1)}]$ is not causal. Thus  Algorithm \ref{Alg:1} returns to \emph{NO}, meaning that $c(\bx)$ can not be regarded as a bicausal coordinate and there does not exist a bicausal coordinates transformation such that Theorem~\ref{Thm:impl} (iv) holds.

3). As the third example, we consider two functions together:
$$
\left\{ \begin{aligned}
\lambda_1&=x_2x_1(-2)+x_3(-1)x_2(-1)\\
\lambda_2&=x_3(-1)x_2(-1)x_1(-1)+x_2x_1(-2)x_1+x_3(-1)x_2(-1)x_1
\end{aligned}\right.
$$
and apply Algorithm \ref{Alg:1}. For $k=1$, $l=1$, $\alpha=[x_2\delta^2,x_1(-2)+x_3(-1)\delta,x_2(-1)\delta]$, we find $P^1_1=\smatrix{0&0&1\\
0&1&0\\1&0&0}$ and $\alpha (P^1_1)^{-1}P^1_1\rd x=\smatrix{x_2(-1)\delta&x_1(-2)+x_3(-1)\delta&x_2\delta^2}\smatrix{\rd x_3\\\rd x_2\\\rd x_1}$. Thus we have $\alpha_1=x_2(-1)\delta$, $\alpha_2=x_1(-2)+x_3(-1)\delta$ and $\bar j_1=\bar j_2=1$. So $[\Delta^1\hat \alpha_1,\Delta^1\hat \alpha_2]=[1,\frac{x_2}{x_3}]$ is causal. Then we find $\hat \lambda=x_2(-1)x_3(-1)$ and $\varphi^1_1=[\tilde x_1,\tilde x_2,\tilde x_3]^T= [x_2x_3,x_2,x_1]^T$ to get 
$\tilde \alpha=[\delta,\tilde x_3(-2),\tilde x_2\delta^2]$ and $\lambda_1=\tilde x_1(-1)+\tilde x_2\tilde x_3(-2)$. Since both $\tilde \alpha_2$ and $\tilde \alpha_3$ are not zero, we drop all the tildes and go to next iteration (i.e., line 15$\to$line 5).  For $k=1$, $l=2$, $\lambda_1=x_1(-1)+ x_2 x_3(-2)$, we find $P^2_1=\smatrix{0&1&0\\
1&0&0\\0&0&1}$ and $\varphi^2_1=[\tilde x_1,\tilde x_2,\tilde x_3]=[x_2x_3(-2)+x_1(-1), x_1,x_3]^T$. Then $\tilde \alpha=[1,0,0]$, we have $\deg(\tilde \alpha_1)=0$ and go to $k=2$ (i.e., line 25$\to$line 2). Notice that $\varphi_1=\varphi^2_1\circ P^2_1\varphi^1_1\circ P^1_1=[\tilde x_1,\tilde x_2,\tilde x_3]^T=[x_2x_1(-2)+x_2(-1)x_3(-1),x_2x_3,x_1]^T$ and in $\varphi_1$-coordinates, we have $\lambda_2=\tilde x_1\tilde x_3+\tilde x_2(-1)\tilde x_3(-1)$. 

Now we are at line 2 and we restart the procedure. For $k=2$, $l=1$, set $z_1=\tilde x_2$ and $z_2=\tilde x_3$ to have $\lambda_2(z_1,z_2)=\tilde x_1z_2+ z_1(-1)z_2(-1)$ and $\alpha=[z_2(-1)\delta,\tilde x_1+z_1(-1)\delta]$. We find  $P^1_2={\rm I}_2$ and $\varphi^1_2=\smatrix{\tilde z_1\\ \tilde z_2}=\smatrix{z_1z_2\\z_2}$. It follows that $\tilde \alpha=[\delta,\tilde x_1]$ and $\lambda_2=\tilde z_1(-1)+\tilde x_1\tilde z_2$. Drop the tildes of $\tilde z_1(-1)$ and $\tilde z_2(-1)$.  For $k=2$, $l=2$, $\lambda_2= z_1(-1)+\tilde x_1 z_2$, we find $P^2_2=\smatrix{0&1\\
1&0}$ and $\varphi^2_2= \smatrix{ z_1(-1)+\tilde x_1z_2\\z_1}$. Thus $\tilde \alpha=[1,0]$ and the algorithm returns to \emph{YES}. Moreover, we have $\varphi_2=\varphi^2_2\circ P^2_2\varphi^1_2\circ P^1_2=\smatrix{\tilde x_1z_2+ z_1(-1)z_2(-1)\\ z_1}$. Thus the complementary coordinate $\theta=z_1=\tilde x_2=x_2x_3$, we can check that $\smatrix{\rd \lambda_1 \\ \rd \lambda_2 \\ \rd \theta}$ is indeed a unimodular matrix.
\end{example}

\section{Applications to  nonlinear differential-algebraic equations with time-delays}\label{sec:5}
Consider a delayed differential-algebraic equation  (DDAE) of the following form:
\begin{align} \label{Eq:DDAE}
 	\Xi:  \quad  \begin{aligned}
  \sum\limits_{j=0}^{\bar j}E^j(\bx_{[\bar i]})\delta^j\dot x=F(\bx_{[\bar i]}) 
 	\end{aligned}
\end{align}
with an initial-value function  	$x(s)=\xi_x(s), \  s\in [-\bar i,0]$, where $E^j:\mathbb R^{(\bar i+1)n}\to \mathcal K^{p\times n}$ and $F:\mathbb R^{(\bar i+1)n}\to \mathcal K^{p}$, where $\bar i$ and $\bar j$ denote the maximal delay of $x$ and $\dot x$, respectively. We can shortly rewrite (\ref{Eq:DDAE}) as $E(\bx,\delta)\dot x=F(\bx)$, where $E(\bx,\delta)=\sum\limits_{j=0}^{\bar j}E^j(\bx_{[\bar i]})\delta^j\in \mathcal K^{p\times n}(\delta]$. 
Remark that  the form (\ref{Eq:DDAE}) is general enough to describe many constrained physical models under delay effects as in  e.g. \cite{venkatasubramanian1994time,ascher1995numerical,ha2016analysis}, the DDAE $\Xi$  reduces to a delay-free DAE of the form $E(x)\dot x=F(x)$   \cite{RabiRhei02,kunkel2006differential,chen2022geometric} when $\bar j=\bar i=0$.
\begin{defn} 
 A function $x:\mathbb R\to \mathbb R^n$ is a   solution of $\Xi$ with the initial-value function $\xi_x$  if there exists   $T>0$  such that $x(t)$ is continuously differentiable on $[-\bar i,T)$ and satisfies (\ref{Eq:DDAE}) for all $t\in [0,T)$. 
 \end{defn}
 We will call $\Xi$ an index-$0$ DDAE if $E(\bx,\delta)$ is of full row rank over $\mathcal K(\delta]$. An index-$0$ DDAE is very close to a delayed  ODE, the latter  has the classifications of  the retarded, the neutral and the advanced types, and can be solved via the step method (see e.g., \cite{fridman2014introduction}). For an index-$0$ DDAE $\Xi$, if $p=n$ and $\rk{E^0(\bx)}=p$, we can always rewrite $\Xi$ as a delayed ODE of the neutral type: $$\dot x=(E^0)^{-1}F(\bx)- \sum\limits_{j=1}^{\bar j}(E^0)^{-1}E^j(\bx)\delta^j\dot x=f(\bx_{[0,\bar i]},\dot \bx_{[1,\bar j]}).$$ 
 Remark that if $\rk_{\mathcal K}{E^0(\bx)}\neq p$, then an index-$0$ DDAE results in a mixed type, or in particular, an advanced type delayed ODE, for which, in general, is hard to define a smooth solution unless the 
initial-value functions satisfy some restrictive conditions. In the present note, we are only interested in  delayed ODEs of neutral or retarded types, so below we make the assumption that $\rk_{\mathcal K}{E^0(\bx)}= p$ for index-$0$ DDAEs.

Now given a DDAE $\Xi: E(\bx,\delta)\dot x=F(\bx)$, which may not be index-$0$, we propose the following algorithm  to reduce its index   with the help of the results in sections \ref{sec:3} and \ref{sec:4}. Algorithm \ref{Alg:2} generalizes the geometric reduction algorithm for delay-free DAEs in~\cite{chen2022geometric}. 
\begin{algorithm}
\caption{DDAE reduction algorithm}\label{Alg:2}
\begin{algorithmic}[1]
\REQUIRE $E(\bx,\delta)$ and $F(\bx)$
\ENSURE $E_{k^*}(\bz_{k^*},\delta)$ and $F_{k^*}(\bz_{k^*})$
\STATE Set $k \leftarrow 0$,    $z_k\leftarrow x$, $E_k \leftarrow E$, $F_k \leftarrow F$, $r_{k-1}=p$, $n_{k-1}=n$ 
\IF{$\rk_{\mathcal K(\delta]}E_k(\bz_k,\delta)=r_{k-1}$ }
\RETURN $k^*\leftarrow k$, $z_{k^*}\leftarrow z_k$, $E_{k^*}\leftarrow E_{k}$,  $F_{k^*}\leftarrow F_{k}$
\ELSE 
\STATE Denote $\rk_{\mathcal K(\delta]} E_{k}(\bz_k,\delta)=r_k<r_{k-1}$. 
\STATE Find a  unimodular matrix $Q_k(\bz_k,\delta)\in \mathcal K^{r_{k-1}\times r_{k-1}}(\delta]$   such that $Q_k(\bz_k,\delta)E_k(\bz_k,\delta)=\smatrix{E_{k1}(\bz_k,\delta)\\ 0}$, where $E_{k1}(\bz_k,\delta)\in \mathcal K^{r_{k}\times r_{k-1}}(\delta]$   and $\rk_{\mathcal K(\delta]}E_{k1}(\bz_k,\delta)=r_k$.
\STATE  Denote $Q_{k}(\bz_k,\delta)F_k(\bz_k)=\smatrix{F_{k1}(\bz_k)\\ F_{k2}(\bz_k)}$, where $F_{k2}(\bz_k)\in \mathcal K^{r_{k-1}-r_{k}}$.
\STATE Define the submodule $\mathcal F_k:=\kspan{\rd F_{k2}(\bz_k)}$. Denote $\dim \mathcal F_k=n_{k-1}-n_{k}\le r_{k-1}-r_{k}$.
\STATE Assume that $\mathcal F_k$ satisfies \textbf{(C)}.
\IF{$n_{k-1}-n_{k}<r_{k-1}-r_{k}$} 
\STATE   Find functions $\tilde F_{k2}(\bz_k)\in \mathcal K^{n_{k-1}-n_{k}}$ such that $\kspan{\rd \tilde F_{k2}}=\mathcal F_{k}$ and $\tilde F_{k2}(\bz_k)=0$ is equivalent to $F_{k2}(\bz_k)=0$.
\STATE $F_{k2}\leftarrow \tilde F_{k2}$
\ENDIF
\STATE Find functions $\theta(\bz_{k})\in \mathcal K^{n_k}$ such that $\smatrix{z_{k+1}\\ \bar z_{k+1}}=\varphi_k(\bz_k)=\smatrix{\theta(\bz_k)\\ F_{k2}(\bz_k)}$ is a bicausal change of $z_k$-coordinates.
\STATE Set $[\tilde E_{k1}, \tilde E_{k2}]\leftarrow E_{k1}\Psi^{-1}_k$ and $z_k\leftarrow \varphi^{-1}_k(\bz_{k+1},\bar \bz_{k+1})$, where $\Psi_k\rd z_k=\rd \varphi_k$ and $\tilde E_{k1}(\bz_{k+1},\bar \bz_{k+1},\delta)\in \mathcal K^{r_{k}\times n_k}(\delta]$.
\STATE Set $E_{k+1}(\bz_{k+1},\delta)\leftarrow \tilde E_{k1}(\bz_{k+1},0,\delta)$ and $F_{k+1}(\bz_{k+1})\leftarrow  F_{k1}(\bz_{k+1},0)$.
\STATE Set $k\leftarrow k+1$ and go to line 2.
\ENDIF
\end{algorithmic}
\end{algorithm} 
\begin{thm}\label{Thm:DDAE}
Consider a DDAE $\Xi$, given by (\ref{Eq:DDAE}). Assume that the submodule $\mathcal F_k$ of Algorithm \ref{Alg:2} satisfies \textbf{(C)} for $k\ge 0$. We have that 
\begin{itemize}
\item[(i)] there exists an integer $0\le k^*\le p$ such that Algorithm \ref{Alg:2} returns to $E_{k^*}(\bz_{k^*},\delta)$ and $F_{k^*}(\bz_{k^*})$.

\item[(ii)] The DDAE $$\Xi^*: \ E_{k^*}(z_{k^*},\delta)\dot z_{k^*}=F_{k^*}(\bz_{k^*}) $$  is index-0, and  $\Xi^*$ and $\Xi$ have isomorphic solutions, i.e., there exists a bicausal change of coordinates $\Phi(\bx)=[z_{k^*},\bar z_{k^*},\ldots,\bar z_1]^T$ such that $z_{k^*}(t)$ is a solution of $\Xi^*$ with the initial-value function $\xi_{z_{k^*}}$ if and only if $x(t)=\Phi^{-1}(\bz_{k^*}(t),0,\ldots,0)$ is a solution of $\Xi$ with the initial-value function $\xi_{x}=\Phi^{-1}(\boldsymbol{\xi}_{z_{k^*}},0,\ldots,0)$.

\item[(iii)] For $E_{k^*}(\bz_{k^*},\delta)=\sum\limits^{\bar j_{z^*}}_{j=0}E^j_{k^*}(\bz_{k^*})\delta^j$, suppose that $\rk_{\mathcal K}E^0_{k^*}(\bz_{k^*})=r_{k^*}$, then $\Xi$ has a unique solution with the initial-value function $\xi_x$ if and only if $r_{k^*}=n_{k^*}$.

\end{itemize}
\end{thm}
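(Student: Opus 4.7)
The plan is to handle the three parts in order, exploiting the results of sections~\ref{sec:3} and~\ref{sec:4} to drive the induction and isolate (iii) as a soft consequence of the reduced form.

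For (i), I would argue by strict monotonicity. The ``else'' branch of Algorithm~\ref{Alg:2} is entered only when $r_k<r_{k-1}$, so the algorithm produces a strictly decreasing sequence of non-negative integers $p=r_0>r_1>\cdots$. Hence the return statement must be triggered within at most $p$ iterations, yielding $k^*\le p$ as claimed.

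For (ii), I would iterate the following one-step reduction and then assemble. At level $k$, left-multiplying $E_k\dot z_k=F_k$ by the unimodular $Q_k$ preserves the solution set over $\mathcal K(\delta]$ and splits the equation into a differential part $E_{k1}\dot z_k=F_{k1}$ and an algebraic constraint $F_{k2}(\bz_k)=0$. Since $\mathcal F_k$ satisfies \textbf{(C)}, Proposition~\ref{Pro:1} lets me replace $F_{k2}=0$ by an equivalent $\tilde F_{k2}=0$ whose $n_{k-1}-n_k$ components have differentials independent over $\mathcal K(\delta]$, and Corollary~\ref{Cor:2} then provides the bicausal change $\varphi_k$ in which the constraint reads $\bar z_{k+1}=\tilde F_{k2}=0$. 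The restricted pair $(E_{k+1},F_{k+1})\leftarrow (\tilde E_{k1}(\cdot,0,\delta),F_{k1}(\cdot,0))$ is precisely the dynamics on the invariant submodule $\{\bar z_{k+1}=0\}$. Composing the $\varphi_k$'s (each trivially extended to the previously frozen $\bar z_j$-blocks) gives the global bicausal $\Phi(\bx)=[z_{k^*},\bar z_{k^*},\ldots,\bar z_1]^T$. The termination test $\rk_{\mathcal K(\delta]}E_{k^*}=r_{k^*-1}$ says $\Xi^*$ has no further algebraic constraint, hence is index-$0$. Any solution $x(t)$ of $\Xi$ is forced by the accumulated constraints into $\bar z_j(t)\equiv 0$, so $\Phi\circ x$ projects to a solution of $\Xi^*$; conversely any solution $z_{k^*}(t)$ of $\Xi^*$ lifts to $x(t)=\Phi^{-1}(\bz_{k^*}(t),0,\ldots,0)$, the constraints being tautologically satisfied, which yields the claimed isomorphism at the level of initial-value problems.

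For (iii), the final $E^0_{k^*}\in\mathcal K^{r_{k^*}\times n_{k^*}}$ has row rank $r_{k^*}$. If $r_{k^*}=n_{k^*}$, then $E^0_{k^*}$ is pointwise invertible, so $\Xi^*$ can be solved explicitly for $\dot z_{k^*}$, producing a neutral-type delayed ODE of the form recalled just before Algorithm~\ref{Alg:2}; the classical step method on the intervals $[0,\bar i],[\bar i,2\bar i],\ldots$ then yields a unique $\mathcal C^1$ extension of the initial-value function, and by the isomorphism of (ii) a unique solution of $\Xi$. If instead $r_{k^*}<n_{k^*}$, the system is underdetermined: at each instant $n_{k^*}-r_{k^*}$ components of $\dot z_{k^*}$ can be prescribed freely while still satisfying the equation, and integrating gives a continuum of $\mathcal C^1$ solutions compatible with $\xi_x$, hence non-uniqueness.

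The main technical difficulty lies in (ii): one must check that the successive $\varphi_k$'s chain coherently---including how delays interact with the iterated restrictions $\bar z_j=0$---so that the composed $\Phi$ pushes the entire solution set of $\Xi$ onto $\Xi^*$ without losing information through the time-shifts, and that the substitution $(\tilde E_{k1}(\cdot,\bar \bz_{k+1},\delta),F_{k1}(\cdot,\bar\bz_{k+1}))\mapsto (\tilde E_{k1}(\cdot,0,\delta),F_{k1}(\cdot,0))$ is consistent with differentiation under delays. This bookkeeping, already delicate in the delay-free geometric reduction of \cite{chen2022geometric}, is where the bulk of the proof must concentrate.
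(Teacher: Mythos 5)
Your proposal is correct and follows essentially the same route as the paper's proof: the strict decrease $r_0>r_1>\cdots$ for (i), unimodular premultiplication by $Q_k$ plus the bicausal changes supplied by Proposition~\ref{Pro:1} and Theorem~\ref{Thm:impl} (you credit Corollary~\ref{Cor:2}, which rests on the same item (ii) of Theorem~\ref{Thm:impl}) composed into $\Phi=[z_{k^*},\bar z_{k^*},\ldots,\bar z_1]^T$ for (ii), and the right-inverse/free-variable (kernel) argument for (iii). The only organizational difference is that the paper establishes the well-definedness of each algorithm step (existence of $Q_k$, $\tilde F_{k2}$, $\theta$) already within part (i), which you defer to part (ii); this is immaterial.
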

\begin{proof}
(i) By using the results of Lemma 4 of \cite{marquez2007new}, Corollary \ref{Pro:1} and Theorem \ref{Thm:impl} above, respectively, we can guarantee the existences of the unimodular matrix $Q_k(\bz_k,\delta)$ of line 6, the functions $\tilde F_{k2}(\bz_k)$ of line 11 and the functions $\theta(\bz_k)$ of line 14. Thus the algorithm does not stop until $r_{k^*}=r_{k^*-1}$. Then by $p\ge r_0>r_1>\cdots>r_{k^*-1}=r_{k^*}\ge 0$, it can be deduced that $0\le k^*\le p$.

(ii) $\Xi^*$ is index-0 because $E_{k^*}(\bz_{k^*},\delta)$ is of full row rank over $\mathcal K(\delta]$. Now consider the  $1,\dots,k$ steps of Algorithm \ref{Alg:2}, the unimodular matrice  $Q_k(\bz_k,\delta)$ for each $k$ does  not change solutions,  we have that $z_{k+1}(t)$ is a solution of $E_{k+1}(\bz_{k+1},\delta)\dot z_{k+1}=F_{k+1}(\bz_{k+1})$ if and only if $x(t)=z_0(t)$ is a solution of $\Xi$, where 
\begin{align}\label{Eq:maps}
    \begin{aligned}
    x(t)=z_0(t)&=\varphi^{-1}_0(\bz_1(t), 0), \\ &\cdots,\\ z_{k}(t)&=\varphi^{-1}_k(\bz_{k+1}(t), 0).
    \end{aligned}
\end{align}
Each bicausal change of coordinates $\varphi_k(z_k,\delta)$ is defined on $\mathcal K^{n_{k-1}}$, we extend it to $\mathcal K^n$ by setting  $\Phi_k=[\varphi_k,\bar z_{k},\bar z_{k-1}, \ldots,\bar z_1]^T$ (set $\Phi_0=\varphi_0$).  Notice that if $k^*=0$, then $\Xi^*$ coincides with $\Xi$ and item (ii) holds immediately. If $k^*>0$, assume without loss of generality that $k+1=k^*$, then $\Phi:=\Phi_{k^*}=[z_{k+1},\bar z_{k+1},\ldots,\bar z_1]^T=[z_{k^*},\bar z_{k^*},\ldots,\bar z_1]^T$ maps any solution $x(t)$ (and its delays) of $\Xi$  to $(z_{k+1}(t),0,\ldots,0)$ by (\ref{Eq:maps}), where $z_{k+1}(t)=z_{k^*}(t)$ is a solution of  $\Xi^*$.

(iii) Rewite $\Xi^*$ as
$$
E^0_{k^*}(\bz_{k^*}) \dot z_{k^*}=F_{k^*}(\bz_{k^*})-\sum\limits_{j=1}^{\bar j_{z_{k^*}}}E_{k^*}^j(\bz_{k^*})\delta^j\dot z_{k^*}.
$$
If $\rk_{\mathcal K}E^0_{k^*}(\bz_{k^*})=r_{k^*}$, then we can always find the right-inverse $(E^0_{k^*})^{\dagger}(\bz_{k^*})$ of $E^0_{k^*}(\bz_{k^*})$ over $\mathcal K$. Thus all solutions of $\Xi^*$ are solutions of the followings delayed ODE corresponding to all choices of free variables $v=v(t)$:
$$
\dot z_{k^*}=(E^0_{k^*})^{\dagger}F_{k^*}(\bz_{k^*})-\sum\limits_{j=1}^{\bar j}(E^0_{k^*})^{\dagger}E_{k^*}^j(\bz_{k^*})\delta^j\dot z_{k^*}+g(\bz_{k^*})v,
$$
where $g(\bz_{k^*})\in \mathcal K^{n_{k^*}\times (n_{k^*}-r_{k^*})}$ is of full column rank over $\mathcal K$ and $E^0_{k^*}(\bz_{k^*})g(\bz_{k^*})=0$.  
So $\Xi^*$   has a unique solution with an initial-value function $\xi_{z_{k^*}}$ if and only if the free variables $v$ is absent, i.e., $n_{k^*}-r_{k^*}=0$. Finally,  since $\Xi^*$ and $\Xi$ have isomorphic solutions by item (ii), we have that item (iii) holds.
\end{proof}
\begin{example}\label{Ex:DDAE}
Consider the following nonlinear DDAE $\Xi$, given by 
$$
\smatrix{1&0&0&0\\
0&1&0&0\\
0&0&1&0\\
0&0&0&0\\
0&0&0&0}\smatrix{\dot x_1\\ \dot x_2 \\ \dot x_3\\ \dot x_4}\!= \!\smatrix{x_2\\x_2^3x_1(-1)/(\ln c-x_1)\\ e^{x_1(-3)+x_3(-2)x_2(-3)}-c\\ x_1(-1)-x_1(-2)+x_3x_2(-1)-x_3(-1)x_2(-2)}.
$$
where $c>0$ is a constant. We apply Algorithm \ref{Alg:2} to $\Xi$. For $k=0$,  $E_0=E\in \mathbb R^{5\times 4}$ is constant and $\rk_{\mathcal K(\delta]} E_0=r_0=3$. Let $Q_0={\rm I}_5$, we get $F_{02}=\smatrix{e^{x_1(-3)+x_3(-2)x_2(-3)}+c\\x_1(-1)-x_1(-2)+x_3x_2(-1)-x_3(-1)x_2(-2)}$ (i.e., line 7). By a direct calculation, it is found that $\mathcal F_0=\kspan{\rd F_{02}}$ is closed and $\dim \kspan{\rd F_{02}}=1<p-r_0=2$. Thus we use the results of Proposition \ref{Pro:1} to find $\bar F_{02}=x_1(-1)+x_3x_2(-1)$ such that $\kspan{\rd \bar F_{02}}=\mathcal F_{0}$. It can be checked by applying Algorithm~\ref{Alg:1} to $\bar F_{02}$ that $\mathcal F_{0}$ satisfies \textbf{(C)}. We modify    $\bar F_{02}$ to $\tilde F_{02}=x_1(-1)+x_3x_2(-1)-\ln{c} $ such that $\tilde F_{02}=0$ is equivalent to $F_{02}=0$ (i.e., line 11). Then $\bar z_1=\tilde F_{02}=x_1(-1)+x_3x_2(-1)-\ln{c}$, $z_1=[x_1, x_2,x_4]^T$ is a bicausal change of coordinates and in  $(z_1, \bar z_1)$-coordinates, $\Xi$ becomes (i.e., line 15) $\smatrix{\tilde E_{01}(\bz_1,\bar \bz_1,\delta)&\tilde E_{02}(\bz_1,\bar \bz_1,\delta)\\ 0&0}\smatrix{\dot z_1\\ \dot {\bar z}_1}=\smatrix{F_{01}(\bz_1,\bar \bz_1)\\ F_{02}(\bz_1,\bar \bz_1)}:$
$$
\smatrix{1&0&0&0\\ 0&1&0&0 \\-\frac{1}{x_2(-1)}\delta&\frac{\bar z_1-\ln{c}+x_1(-1)}{x^2_2(-1)}\delta&0&-\frac{1}{x_2(-1)}\delta \\0&0&0&0\\0&0&0&0} \smatrix{\dot x_1\\ \dot x_2 \\ \dot x_4\\ \dot {\bar z}_1}\!=\!\smatrix{x_2\\ \frac{x_2^3x_1(-1)}{\ln c-x_1}\\ -x_4x_1(-1)\\ ce^{\bar z_1(-2)}-c  \\ \bar z_1-\bar z_1(-1)}.
$$
Thus by setting $\bar \bz_1=0$, we get (i.e., line 16)
$$
E_1=\tilde E_{01}= \smatrix{1&0&0\\ 0&1&0 \\ -\frac{1}{x_2(-1)}\delta&\frac{-\ln{c}+x_1(-1)}{x^2_2(-1)}\delta&0}, \ \ F_1=\smatrix{x_2\\ \frac{x_2^3x_1(-1)}{\ln{c}-x_1}\\ -x_4x_1(-1)}.
$$

Now set $k=1$ and go from line 17 to line 2. We have $\rk_{\mathcal K(\delta]} E_1=r_1=2<r_0$. Choose $Q_2(\bz_1,\delta)= \smatrix{1&0&0\\ 0&1&0 \\ \frac{1}{x_2(-1)}\delta&\frac{\ln{c}-x_1(-1)}{x^2_2(-1)}\delta&1}$ to define $F_{12}=1+x_2(-1)x_1(-2)-x_4x_1(-1)$. We can check via Algorithm \ref{Alg:1} that $\kspan{\rd F_{12}}$ satisfies \textbf{(C)} and $\bar z_2=F_{12}$, $z_2=[\tilde x_1,\tilde x_2]^T=[x_1,x_2x_1(-1)]^T$ define a bicausal change of $z_1$-coordinates. By  similar calculations as for $k=0$, we have that (line 16)
$$
E_2=\smatrix{1&0\\ \frac{\tilde x_2}{\tilde x_1(-1)}\delta&\tilde x_1(-1)}, \ \ F_2=\smatrix{\frac{\tilde x_2}{\tilde x_1(-1)}\\ \frac{\tilde x_2^3}{\tilde x^2_1(-1)(\ln{c}-\tilde x_1)}}.
$$
Go from line 16 to line 2 and set $k=2$, we have $\rk_{\mathcal K(\delta]} E_2=r_2=2=r_1$, thus Algorithm \ref{Alg:2} returns to $k^*=2$ and $z^*=z_2$. The DDAE $\Xi^*: E_2(\bz^*,\delta)\dot z^*=F_2(\bz^*)$ is clearly index-0 and we can rewrite it as a  delayed ODE of the neutral-type:
 \begin{align}\label{Eq:DODE}
 \dot {\tilde x}_1=\frac{\tilde x_2}{\tilde x_1(-1)},\ \ \dot {\tilde x}_2=\frac{\tilde x_2^3}{\tilde x^3_1(-1)(\ln{c}-\tilde x_1)}-\frac{\tilde x_2}{\tilde x^2_1(-1)}\dot {\tilde x}_1(-1). 
 \end{align}
 Given  initial-value conditions $\tilde x_1(s)=\xi_{\tilde x_1}(s)$, $s\in [-1,0]$ and $\tilde x_2(s)=\xi_{\tilde x_2}(s)$, $s\in [-1,0]$, we can calculate the solution $(\tilde x_1(t),\tilde x_2(t))$ of (\ref{Eq:DODE}) with respect to $(\xi_{\tilde x_1},\xi_{\tilde x_2})$ by the step method. Hence by Theorem \ref{Thm:DDAE} (ii), $\Phi^{-1}({\tilde x_1}(t),{\tilde x_2}(t),0,0)$ is the solution of $\Xi$ with the initial-value conditions $\Phi^{-1}(\xi_{\tilde x_1},\xi_{\tilde x_2},0,0)$, where $\Phi=[x_1,x_2x_1(-1),\bar z_2,\bar z_1]^T=[x_1,x_2x_1(-1),1+x_2(-1)x_1(-2)-x_4x_1(-1),x_1(-1)+x_3x_2(-1)-\ln{c}]^T$ is a bicausal change of coordinates. 
 \end{example}

\section{Conclusions and Perspectives}\label{sec:6}
 In order to generalize the implicit function theorem to the time-delay case,  we have proposed two extra equivalent conditions to the results of bicausal changes of coordinates in \cite{califano2016accessibility}. A technical lemma and an iterative algorithm are given to check those equivalent conditions. Moreover, we show that the generalized implicit function theorem can be used for reducing the index and for solving time-delay differential-algebraic equations.
 
 There are some further problems can be investigated based on our results. The example in Remark \ref{Rem:2} shows that it is possible to find a weaker condition for the time-delay implicit function theorem.    Another problem is to extend Algorithm \ref{Alg:2} to the general case when $\mathcal F_k$ does not satisfies \textbf{(C)}. Moreover, an interesting observation from Example \ref{Ex:DDAE}, which has already been pointed out in \cite{campbell19912,ascher1995numerical}, is that even the original DDAE $\Xi$ has a form of retarded type, the resulting delayed ODE can still be of neutral type (or even advanced type in general),  the problem of finding when a given DDAE can be reformulated as a delayed ODE of retarded, neutral, or advanced type, is open and challenging. Another   topic is to use our results to design reduce-order observers for both the states \cite{califano2020observability} and the inputs \cite{chen2022strong} of time-delay systems.

\bibliographystyle{ieeetrans}
\bibliography{bibchen}

\end{document}